\documentclass[11pt]{article}
\usepackage{palatino}
\usepackage{pgfplots} 
\usepackage{tikz, tikzpeople}
\usetikzlibrary{calc,intersections,decorations.pathmorphing,decorations.markings,patterns,arrows.meta,positioning}
\usetikzlibrary{angles,quotes}
\usetikzlibrary{arrows.meta}
\usetikzlibrary{arrows}
\usetikzlibrary{shapes.multipart}
\usetikzlibrary{fit}
\usetikzlibrary{shapes.geometric,positioning}
\usetikzlibrary{calc}
\usepackage{bm}
\usepackage[utf8]{inputenc}
\usepackage{mathtools}
\usepackage{xfrac}
\usepackage{graphicx}
\usepackage[linesnumbered,ruled,vlined]{algorithm2e}
\usepackage{cite}
\usepackage{amsmath,amssymb,amsfonts, amsthm,  mathrsfs}
\usepackage{algpseudocode}
\usepackage{textcomp}
\usepackage{xcolor}
\usepackage{braket}
\usepackage{enumitem}
\usepackage{cancel}
\usepackage{hyperref}
\usepackage{stfloats}

\usetikzlibrary{patterns}
\newtheorem{lemma}{Lemma}
\newtheorem{theorem}{Theorem}

\newtheorem{remark}{Remark}
\newtheorem{corollary}{Corollary}
\newtheorem{definition}{Definition}

\newlength{\leftstackrelawd}
\newlength{\leftstackrelbwd}
\def\leftstackrel#1#2{\settowidth{\leftstackrelawd}%
{${{}^{#1}}$}\settowidth{\leftstackrelbwd}{$#2$}%
\addtolength{\leftstackrelawd}{-\leftstackrelbwd}%
\leavevmode\ifthenelse{\lengthtest{\leftstackrelawd>0pt}}%
{\kern-.5\leftstackrelawd}{}\mathrel{\mathop{#2}\limits^{#1}}}

\DeclareMathOperator{\Tr}{Tr}
\DeclareMathOperator{\supp}{supp}
\DeclareMathOperator*{\argmax}{arg\,max}

\newcommand{\mN}{\mathsf{N}} 
 
\newcommand{\mP}{\mathsf{P}}

\newcommand{\unif}{\mathsf{Unif}}

\newcommand{\C}{\mathchoice
  {\mathrm{\scriptscriptstyle C}} 
  {\mathrm{\scriptscriptstyle C}} 
  {\mathrm{\scriptscriptstyle C}} 
  {\mathrm{\scriptscriptstyle C}} 
}

\newcommand{\EA}{\mathchoice
  {\mathrm{\scriptscriptstyle EA}} 
  {\mathrm{\scriptscriptstyle EA}} 
  {\mathrm{\scriptscriptstyle EA}} 
  {\mathrm{\scriptscriptstyle EA}} 
}

\newcommand{\opt}{\mathchoice
  {\mathrm{\scriptscriptstyle opt}} 
  {\mathrm{\scriptscriptstyle opt}} 
  {\mathrm{\scriptscriptstyle opt}} 
  {\mathrm{\scriptscriptstyle opt}} 
}

\allowdisplaybreaks

\title{Quantum Entanglement Assistance Improves the Capacity and Activates the Zero-Error Capacity of Classical Channels with Causal CSIT}

\author{Yuhang Yao, Syed A. Jafar\\
{\small Department of Electrical Engineering and Computer Science}\\
{\small University of California Irvine, Irvine, CA 92697}\\
{\small \it Email: \{yuhangy5, syed\}@uci.edu}
}

\date{}

\begin{document}
\maketitle

\begin{abstract}
For  classical point-to-point channels, it has been shown by Bennett et al. that quantum entanglement assistance cannot improve their capacity, and by Cubitt et al. that entanglement assistance cannot activate (increase from zero to non-zero) their zero-error capacity. In contrast, we show that for classical point-to-point channels with causal CSIT (channel state information at the transmitter), quantum entanglement assistance can in some cases improve their capacity, and in some cases activate their zero-error capacity.
\end{abstract}

\section{Introduction}
The fundamental limits of reliable communication are among the most essential questions in information theory. The \emph{capacity} of a classical communication channel represents such a limit, defining a threshold for the rate of reliable communication which cannot be surpassed even if the encoder and decoder are equipped with unlimited (classical) computational resources, and are allowed to share unlimited amounts of (classical) common randomness in advance. The standard notion of capacity --- henceforth referred to as \emph{classical} capacity --- does not, however, account for scenarios in which the encoder and decoder  also share prior \emph{quantum} resources, such as \emph{entanglement.} 
\subsubsection*{Entanglement-assisted Communication}
On one hand, quantum resources are non-signaling --- quantum entanglement shared between two parties does not by itself allow them to communicate with each other \cite{PRbox}. On the other hand, however, shared quantum resources can assist in the encoding and decoding of messages, e.g., through local measurements conditioned on the information available to the sender and receiver. The outcomes of such measurements can exploit \emph{non-local correlations} \cite{barrett2005nonlocal} that are enabled by entanglement, and are unattainable by classical means. Thus, entanglement-assisted communication has the potential to surpass the classical capacity barrier. Indeed, the search for settings \emph{where entanglement-assisted capacity exceeds  classical capacity} is an active area of research, in pursuit of this particular form of \emph{quantum advantage}. This will also be the motivation for our work in this paper.

\subsubsection*{The Search for a Quantum Advantage}
Despite the evident potential, such a quantum advantage has not been easy to find, especially in the point-to-point communication setting. In fact in many cases it has been shown that such an advantage does not exist. Notably, Bennett et al. showed in \cite{Bennett_Shor_Smolin_Thapliyal_PRL} that entanglement assistance cannot increase the classical capacity of a point-to-point channel. Matthews in \cite{matthews2012linear} further strengthened this result, showing that any non-signaling resource (which strictly includes quantum entanglement) cannot improve the classical capacity of a point-to-point channel. On the positive side, significant advantages from entanglement-assistance have been found for a more restricted notion of capacity --- the zero-error capacity \cite{cubitt2010improving, leung2012entanglement, cubitt2011zero}. However, a sobering observation is that entanglement-assistance cannot \emph{activate} (i.e., increase from zero to non-zero) the zero-error capacity of a point-to-point channel \cite{cubitt2010improving}. Finally, while our focus in this work is on point-to-point communication, it is noteworthy that the scope of search for capacity improvements from entanglement-assistance has also been extended to communication \emph{networks} \cite{Quek_Shor, leditzky2020playing, seshadri2023separation,pereg2024MAC_QEassist, hawellek2025interference}, leading to various discoveries that include, e.g.,  an approximately $5\%$ \cite[Cor. 7]{seshadri2023separation} capacity gain from entanglement assistance in a $2$-to-$1$ classical multiple access channel.

\subsubsection*{Classical Channel with State and Causal CSIT}
Among the canonical models of point-to-point communication in information theory, is the `channel with state' setting, defined by a tuple $(\mathcal{X},\mathcal{Y},\mathcal{S},\mP_S, \mN_{Y|XS})$, such that $\mathcal{X},\mathcal{Y},\mathcal{S}$ represent the alphabet sets for the input, output and channel state, $\mP_S$ is the distribution of the channel state (assumed i.i.d. varying across channel-uses), and $\mN_{Y|XS}$ defines the output distribution of the memoryless channel conditioned on its current input and current state. Depending on whether the \underline{c}hannel \underline{s}tate \underline{i}nformation at the \underline{t}ransmitter (CSIT) is made available in a causal or non-causal fashion, the setting is identified as a \emph{causal CSIT} or a \emph{non-causal CSIT} setting, respectively. The classical capacity with causal CSIT is found by Shannon in \cite{shannon_CSIT}, and that with non-causal CSIT is found by Gelfand and Pinsker in \cite{gel1980coding}. This brings us to a key question: \emph{Can entanglement-assistance provide a quantum advantage for the capacity of a classical point-to-point channel with state?} Our goal in this work is to answer this question, specifically for the \emph{causal CSIT} setting.

\subsubsection*{Connection to Prior Work}Since classical channels can be regarded as special cases of quantum channels, it is immediately noteworthy that  the entanglement-assisted capacity of a quantum channel with non-causal CSIT was  studied by Dupuis \cite{dupuis_QCSIT}, and revisited by Pereg \cite{pereg_QCSIT} where the latter also investigated the case with causal CSIT. These works  obtain capacity expressions as non-convex optimization problems involving auxiliary quantum systems and auxiliary quantum channels. A challenging aspect of utilizing these capacity expressions, noted in \cite{dupuis_QCSIT}, is that despite having a `single-letter' expression, the capacity is not generally \emph{computable}  because there is no upper bound on the dimension of auxiliary quantum systems that need to be optimized for the capacity computation. To our knowledge there are no available specializations of these capacity results for \emph{classical} channels that would allow efficient computation of entanglement-assisted capacity for our setting. In fact, for our focused objective of determining the existence of a quantum advantage, general capacity optimizations will not be necessary in this work, and direct proofs will be provided.

\begin{figure}[htbp]
\center
\includegraphics[width=0.67\textwidth]{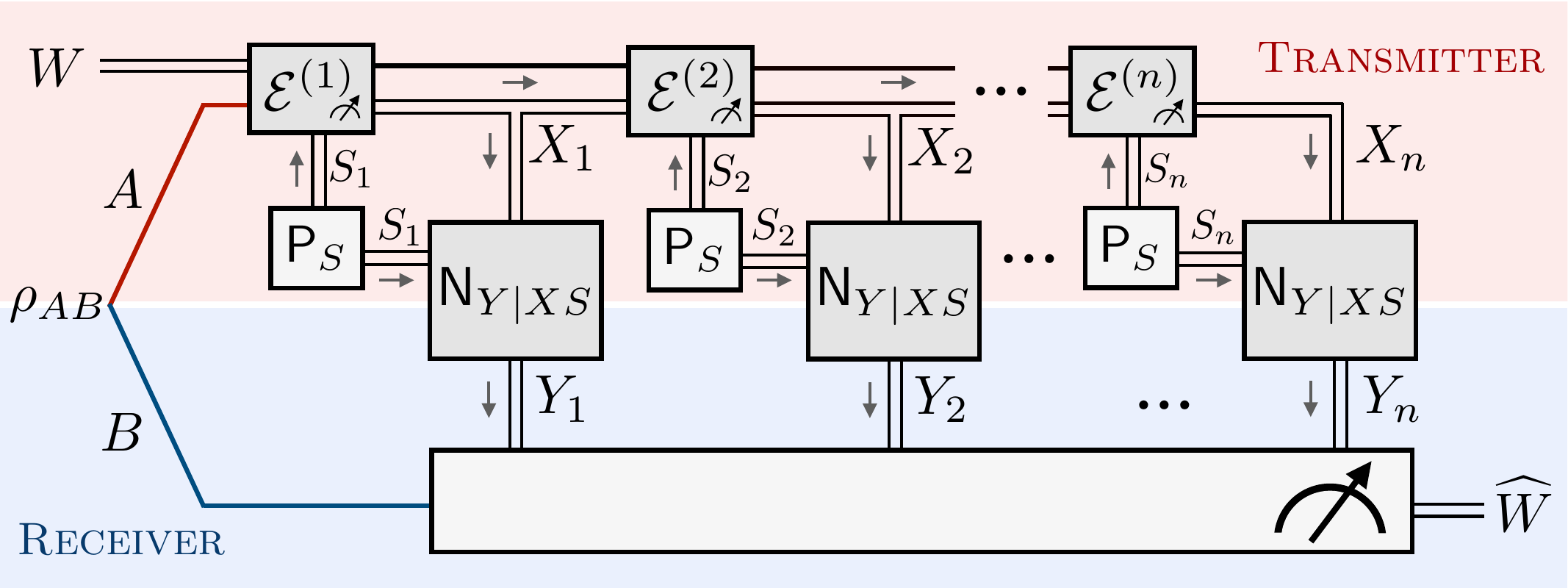}
\caption{General entanglement-assisted coding scheme with causal CSIT. A message $W$ is encoded into symbols $X_1,\dots,X_n$ that are sent by the transmitter over $n$ uses of a memoryless channel with state $\mN_{Y\mid XS}$. The state sequence $S_1,\dots,S_n\stackrel{\mbox{\tiny i.i.d.}}{\sim} \mP_S$ is revealed causally to the transmitter. The encoding of $X_i$ over the $i^{th}$ channel use is based on a quantum instrument depicted as $\mathcal{E}^{(i)}$ in the figure which can depend (the dependencies are not shown in the figure) on $W, S_1,\dots, S_i, X_1,\dots,X_{i-1}$, acting on the transmitter's side $(A)$ of an (entangled) quantum system $AB$ shared with the receiver in advance. The decoding at the receiver is based on a POVM that depends on the entire channel output sequence $Y_1,\dots,Y_n$, acting on $B$. See Section \ref{sec:EAcoding} for a formal description.}
\label{fig:QEAcoding}
\end{figure}

\subsubsection*{Overview of Results}
We formalize the framework of entanglement-assisted coding schemes for communication over a classical channel with \emph{causal} CSIT (illustrated in Fig. \ref{fig:QEAcoding}). Informally, a transmitter wants to (reliably) send a message to a receiver over a classical channel. The channel is discrete memoryless, and it depends on an i.i.d. random state, whose realization is revealed causally to the transmitter. The transmitter and the receiver may share additional (entangled) quantum resources that are generated and distributed before the communication begins. 
Given a channel with state, there are two metrics of particular interest in this work: 1) its capacity, associated with coding schemes that have vanishing error for large blocklengths, and 2) its zero-error capacity, associated with coding schemes that have zero error. Our main goal is to find channels with state whose classical capacity and/or zero-error capacity can be improved by entanglement assistance. The key results obtained in this work are the following.
\begin{enumerate}
  \item We identify a family of point-to-point channels with state (referred to as `\emph{graph channels with state},' see Definition \ref{def:graph_channel}), whose capacity with causal CSIT is improved by entanglement assistance. This is in sharp contrast to the point-to-point channel without state, where such an improvement is known to be impossible \cite{Bennett_Shor_Smolin_Thapliyal_PRL, matthews2012linear}. Remarkably, we find channels with state for which the capacity improvement from entanglement-assistance is quite significant, approaching $12/\pi^2\approx 21.6\%$. 
  \item We construct a channel with state and causal CSIT whose zero-error capacity is equal to $0$ without entanglement, but equal to $1$ with entanglement assistance. This is in contrast to the channel without state, for which it has been shown by Cubitt et al. in \cite{cubitt2010improving} that entanglement assistance \emph{cannot} activate (increase from zero to non-zero) the zero-error capacity.  It follows that the gain of zero-error capacity from entanglement assistance can be unbounded for a channel with causal CSIT.  Our construction of the channel with state is based on a bipartite Kochen Specker (B-KS) set \cite{BPQS} of orthogonal bases.  It is worthwhile to point out that \cite{BPQS} shows that a B-KS set can be made out of any bipartite quantum perfect strategy (including the one used in the Mermin-Peres magic square game \cite{mermin1990simple,peres1990incompatible}). Therefore, any bipartite quantum perfect strategy corresponds to such a channel with state. 
\end{enumerate}

\section{Problem Formulation}
\subsection{Notation}
Let $\mathbb{N}, \mathbb{Z}_{\geq 0}, \mathbb{R}_{\geq 0}$ denote the set of positive integers, the set of non-negative integers, and the set of non-negative reals, respectively. 
For $n\in \mathbb{N}$, we use $[n]$ to denote $\{1,2,\cdots, n\}$ and $x^n$ to denote $(x_1,x_2,\cdots,x_n)$. 

For random variables $X,Y,Z$, let $H(X), H(Y\mid X)$ denote the Shannon entropy, conditional entropy, respectively, and let $I(X;Y)$, $I(X;Y\mid Z)$ denote the mutual information and conditional mutual information, respectively. With a slight abuse of notation,  we also use $H(\mP)$ to denote the Shannon entropy of a random variable that has distribution $\mP$, i.e., $H(\mP) = -\sum_{i} \mP(i)\log_2 \mP(i)$, with the convention $0\log 0\triangleq 0$. $H_b(p) \triangleq H([p,1-p])$ denotes the binary entropy function.  $\mathbb{I}[x]$ denotes the indicator function, returning $1$ when the predicate $x$ is true and $0$ otherwise. $\Pr(E)$ denotes the probability of an event $E$. 
For compact notation, we may write $\Pr(A=a \mid B=b)$ as $\Pr(a\mid b)$, where uppercase letters denote random variables and the corresponding lowercase letters denote their realizations.

We use $\mathcal{H}_A$ to denote the Hilbert space associated with quantum system $A$. Let $\mathcal{L}(\mathcal{H})$ denote the set of linear operators on $\mathcal{H}$, and let $\mathcal{D}(\mathcal{H}) \subset \mathcal{L}(\mathcal{H})$ denote the set of density operators on $\mathcal{H}$. A quantum instrument $\mathcal{E}=\{\mathcal{E}_x\}_{x\in \mathcal{X}}$ on a Hilbert space $\mathcal{H}$ is a collection of completely-positive, trace-non-increasing maps $\mathcal{E}_x\colon \mathcal{L}(\mathcal{H}) \to \mathcal{L}(\mathcal{H})$ such that $\sum_{x\in \mathcal{X}}\mathcal{E}_x$ is trace preserving.  A positive operator-valued measure (POVM) $\{D_x\}_{x\in \mathcal{X}}$ on a Hilbert space $\mathcal{H}$ is a collection of positive semidefinite operators $D_x \in \mathcal{L}(\mathcal{H})$ such that $\sum_{x\in \mathcal{X}}D_x = I$ where $I$ is the identity operator on $\mathcal{H}$. We use ${\rm id}(\cdot)$ to denote the identity channel. Let $\Tr[\cdot]$ denote the trace of an operator and $\Tr_A[\cdot]$ denote the partial trace over system $A$. We use $M^\top, M^*$ and $M^\dagger$ to denote the transpose, complex conjugate, and conjugate transpose of a matrix $M$, respectively.

\subsection{Communication over classical channel with causal CSIT}
We adopt the standard model of a classical channel with state (cf. \cite[Sec. 7]{NIT}). A channel with state is specified by $(\mN, \mP_S)$ with underlying finite alphabets $(\mathcal{X}, \mathcal{Y}, \mathcal{S})$. $\mN (y\mid x,s)$ specifies the channel's conditional probability distribution, i.e., the probability of output $Y=y$ given the input $X=x$ and the channel state $S=s$, for $x\in \mathcal{X}, s\in \mathcal{S}, y\in \mathcal{Y}$. $\mP_S(s)$ specifies the probability of the channel state being $S=s$ for $s\in \mathcal{S}$. Without loss of generality, we assume $\mP_S(s)>0$ for all $s\in \mathcal{S}$.

We assume that the channel is memoryless and the state is i.i.d. across channel uses. Specifically, for $n$ uses of the channel, $X^n, S^n, Y^n$ collectively denote the inputs, states, and the outputs corresponding to the $n$ channel uses, respectively. The probability distribution of $S^n$ is $\mP_{S^n}(s^n) \triangleq \prod_{i=1}^n \mP_S(s_i) \triangleq \mP_S^{\otimes n}(s^n)$. The channel's conditional distribution for $n$ uses of the channel is $\prod_{i=1}^n \mN (y_i\mid x_i,s_i) \triangleq \mN^{\otimes n}(y^n\mid x^n, s^n)$, for $s^n\in \mathcal{S}^n$, $x^n\in \mathcal{X}^n$, $y^n \in \mathcal{Y}^n$.
 
A message $W\in [M]$ originates uniformly at the transmitter, is encoded into a sequence of $n$ symbols (a codeword) $X_1,X_2,\dots, X_n$ that are input into the channel over $n$ channel uses. We mainly focus on the causal CSIT case.  Causal CSIT implies that each codeword symbol can depend on the current and past channel states but not the future channel states. Such a scheme is called an $(M,n)$ coding scheme, where $M$ is the size of the message and $n$ is the blocklength.  Depending on the type of resources that the encoder and decoder are allowed to share in advance, the following two progressively more capable scenarios are explored in this work: 1) they can share classical randomness, 2) they can share entangled quantum resources. Each setting is formalized by defining its feasible space of coding schemes.

\begin{remark} \label{rem:input_constraint}
  In this paper we sometimes find it convenient to model the state of the channel as imposing input constraints on a channel without state. Let ${\mN}_o(y\mid x)$ be a channel without state with input alphabet $\mathcal{X}$ and output alphabet $\mathcal{Y}$, and let $\mathcal{X}_s\subseteq \mathcal{X}$ for $s\in \mathcal{S}$. When the channel state is $S=s$, the transmitter is restricted to choosing an input $x\in \mathcal{X}_s$. Such a channel can be viewed as a standard channel with state by defining $\mN(y\mid x,s) = \mN_o(y\mid x)$ if $x\in \mathcal{X}_s$ and $\mN(y\mid x,s) = \mN_o(y\mid x_s)$ if $x\not\in \mathcal{X}_s$, for some reference symbol $x_s\in \mathcal{X}_s$.
\end{remark}

\subsection{Classical coding schemes with causal CSIT}
A classical $(M,n)$ coding scheme with causal CSIT is specified by a tuple $(\mP_{Q}, \{\phi^{(i)}\}, \psi)$. Both the transmitter and the receiver may additionally have access to a random variable $Q\in \mathcal{Q}$, which is generated independently of the message $W$. $\mP_Q$ denotes the distribution of $Q$. For $i\in [n]$, $\phi^{(i)} \colon [M]\times \mathcal{S}^{i} \times \mathcal{Q} \to \mathcal{X}$ specifies the encoder at the $i^{th}$ channel use, such that $X_i = \phi^{(i)}(W,S^i,Q)$, which depends on the message, the current and the past channel states and the random variable $Q$. At the receiver, $\psi\colon \mathcal{Y}^n\times \mathcal{Q} \to [M]$ specifies the decoder, such that the decoded message $\widehat{W} = \psi(Y^n,Q)$. The probability of success (decoding the message correctly) is,
\begin{equation}
\begin{aligned}
  &\Pr(\widehat{W} =W) = \frac{1}{M} \sum_{q} \mP_Q(q)  \sum_{w} \sum_{s^n} \mP_{S}^{\otimes n}(s^n)   \times \\
  &\qquad\qquad\qquad \sum_{y^n\colon \psi(y^n,q)=w}~ \prod_{i=1}^n \mN\big(y_i\mid \phi^{(i)}(w,s^i,q),s_i\big).
   \label{eq:prob_succ_classical}
\end{aligned}
\end{equation}
\begin{remark} \label{rem:det_coding}
  The optimal probability of success, over the set of classical $(M,n)$ coding schemes, is achieved by $|\mathcal{Q}|=1$, i.e., a deterministic scheme without (common) randomness, by noting that $\exists q^*\in \mathcal{Q}$ such that Eq. $\eqref{eq:prob_succ_classical}= \sum_q \mP_Q(q) f(q) \leq f(q^*)$, implying the existence of an optimal scheme that needs no randomness.
\end{remark}

\subsection{Entanglement-assisted coding schemes with causal CSIT}\label{sec:EAcoding}
An entanglement-assisted $(M,n)$ coding scheme with causal CSIT allows the transmitter and the receiver to share in advance any entangled quantum resource, which is used to assist both the encoding and decoding processes.   
Formally, such a scheme is specified by a tuple $$(\rho_{AB}, \{\mathcal{E}^{(i)}\}_{i\in [n]}, \{D_{\widehat{w}\mid y^n}\}_{y^n\in \mathcal{Y}^n, \widehat{w}\in [M]})$$. 
Before the communication, the transmitter and the receiver share a bipartite quantum system $AB$ in the state $\rho_{AB} \in \mathcal{D}(\mathcal{H}_A \otimes \mathcal{H}_B)$, such that $A$ is with the transmitter and $B$ is with the receiver. 

At the first time slot, conditioned on $W=w$ and $S_1=s_1$, the transmitter applies the (encoding) quantum instrument $\mathcal{E}^{(1)} = \{\mathcal{E}_{x_1\mid (w,s_1)}^{(1)}\}_{x_1\in \mathcal{X}}$ to $A$ and observes the classical outcome $X_1$, which is then sent to the channel. 
For $x_1\in \mathcal{X}$, the outcome $X_1=x_1$ happens with conditional probability $\Pr(x_1\mid w, s_1)=\Tr\big[\rho_{x_1\mid w,s_1}\big]$, where $\rho_{x_1\mid w,s_1} \triangleq (\mathcal{E}_{x_1\mid (w,s_1)}^{(1)} \otimes {\rm id}_B)(\rho_{AB})$ denotes the  unnormalized post-measurement state. We denote the normalized post-measurement state as $\rho_{AB\mid w,s_1,x_1} = \rho_{x_1\mid w,s_1}/\Pr(x_1\mid w, s_1)$.

Sequentially for $i=2,3,\cdots, n$, in the $i^{th}$ channel use, conditioned on $(W=w,S^i=s^i,X^{i-1}=x^{i-1})$, the transmitter applies the $i^{th}$ quantum instrument $\mathcal{E}^{(i)} = \{\mathcal{E}_{x_i \mid (w,s^i,x^{i-1})}^{(i)}\}_{x_i\in \mathcal{X}}$ to $A$ and observes the classical outcome $X_i$, which is then sent to the channel. 
For $x^i\in \mathcal{X}^i$, the (joint) outcome $X^i=x^i$ happens with conditional probability $\Pr(x^i\mid w,s^i)= \Tr\big[\rho_{x^i\mid w,s^{i}}\big]$, where $\rho_{x^i\mid w,s^{i}} \triangleq (\mathcal{E}_{x_i \mid (w,s^i,x^{i-1})}^{(i)} \otimes {\rm id}_B)(\rho_{x^{i-1}\mid w,s^{i-1}})$ denotes the unnormalized post-measurement state. We denote the normalized post-measurement state as $\rho_{AB\mid w,s^i,x^i} = \rho_{x^i\mid w,s^i}/\Pr(x^i\mid w,s^i)$.

After $n$ channel uses, the receiver collects $Y^n$ from the channel. Conditioned on $Y^n=y^n$, the receiver applies the POVM $\{D_{\widehat{w}\mid y^n}\}_{\widehat{w} \in [M]}$ on $B$ and observes the classical outcome $\widehat{W}$ as the decoded message. 
The conditional probability $\Pr(\widehat{w} \mid w, s^n, x^n, y^n) = \Tr\big[ (I\otimes D_{\widehat{w}\mid y^n}) \rho_{AB\mid w,s^n,x^n} \big]$. The probability of success is,
\begin{align}
	&\Pr(\widehat{W}=W) = \frac{1}{M}\sum_{w}  \sum_{s^n,x^n, y^n} \mP_S^{\otimes n}(s^n) \times   \mN^{\otimes n}(y^n\mid s^n,x^n) \notag \\
	&\qquad \qquad\qquad \times   \Tr\big[(I\otimes D_{w\mid y^n}) \rho_{x^n\mid w,s^n} \big]
\end{align}
Since classical common randomness can be modeled as a part of the shared quantum resource, it is clear that the entanglement-assisted coding schemes are at least as capable as  classical coding schemes.
 
\subsection{Rate and capacity}
A (communication) rate $R\in \mathbb{R}_{\geq 0}$ is said to be  achievable classically (resp. with entanglement assistance) if (and only if) there exists a sequence (indexed by $n$) of classical (resp. entanglement-assisted) $(M_n,n)$ coding schemes (with the message and the decoded message denoted by $W_n$ and $\widehat{W}_n$, respectively), such that   
\begin{align}
  \lim_{n\to \infty} \Pr(\widehat{W}_n=W_n) = 1 ~~\mbox{and} ~~\lim_{n\to \infty} \frac{\log_2 M_n}{n} \geq R.
\end{align}
For a channel with state $(\mN, \mP_S)$, let $C^{\C}$ denote its classical capacity with causal CSIT, defined as the supremum of achievable rates using classical coding schemes. The entanglement-assisted capacity with causal CSIT, denoted by $C^{\EA}$, is defined similarly for entanglement-assisted coding schemes.

\subsection{Zero-error capacity}
For a channel with state $(\mN, \mP_S)$ and $n\in\mathbb{N}$, let $M_{n,\opt}^{\C}$ (resp. $M_{n,\opt}^{\EA}$) be the maximum value of $M$, such that there exists a classical (resp. entanglement-assisted) $(M,n)$  coding scheme with probability of success $\Pr(\widehat{W}=W)=1$, i.e., probability of error is $0$.
We define $c_0^{\C} \triangleq \log_2(M_{1,\opt}^{\C})$ as the one-shot zero-error capacity, and similarly $c_0^{\EA} \triangleq \log_2(M_{1,\opt}^{\EA})$ as the entanglement-assisted one-shot zero-error capacity. 
We further define the zero-error capacity (with causal CSIT) as $C_0^{\C} \triangleq \sup \big\{ \frac{1}{n}\log_2 (M_{n,\opt}^{\C}) \colon n\in \mathbb{N} \big\}$. Similarly, the entanglement-assisted zero-error capacity (with causal CSIT) is defined as $C_0^{\EA} \triangleq \sup\big\{ \frac{1}{n}\log_2 (M_{n,\opt}^{\EA}) \colon n\in \mathbb{N} \big\}$.

\section{Results} \label{sec:results}
Recall that we consider only classical channels with causal CSIT throughout this work. The notations associated with the various notions of capacity are summarized in Table \ref{tab:capacity}.
\begin{table}[htbp]
\caption{Notation for different notions of capacity}
\label{tab:capacity}
\centering
\begin{tabular}{|c|c|}
	\hline Notation & Description \\ \hline 
	$C^{\C}$ & $\Pr({\rm error}) \to 0$, without entanglement assistance\\\hline
	$C^{\EA}$ & $\Pr({\rm error}) \to 0$, with entanglement assistance \\\hline
	$C^{\C}_0$ & $\Pr({\rm error}) = 0$, without entanglement assistance \\\hline
	$C^{\EA}_0$ & $\Pr({\rm error}) = 0$, with entanglement assistance \\\hline
	$c^{\C}_0$ & One-shot, $\Pr({\rm error}) = 0$, without entanglement assistance\\\hline
	$c^{\EA}_0$ & One-shot, $\Pr({\rm error}) = 0$, with entanglement assistance \\\hline
\end{tabular}
\end{table}
\subsection{Classical capacity with causal CSIT can be improved by entanglement assistance}
We first show that with \emph{causal} CSIT,  for certain channels with state, the classical capacity is improved by quantum entanglement assistance, $C^{\EA} > C^{\C}$. Our examples include the following class of channels labeled the \emph{graph channel with state} (Definition \ref{def:graph_channel}), along with their noisy versions (Definition \ref{def:noisy}). 
 
\begin{definition}[Graph channel with state] \label{def:graph_channel}
A graph channel with state is specified by $(\mathcal{G},\mP_S)$, where $\mP_S$ is the state distribution and $\mathcal{G}(\mathcal{Y},\mathcal{S})$ is a simple undirected graph whose vertex set $\mathcal{Y}$ corresponds to the channel output alphabet and whose edge set $\mathcal{S}$ corresponds to the channel state alphabet. 
The channel input alphabet is $\mathcal{X}=\{0,1\}$.  
Conditioned on the channel state $S=s$, the transmitter selects one of the two endpoints of the edge $s$ as the channel output observed by the receiver. More precisely, for each edge $s\in \mathcal{S}$, let $s_0,s_1\in \mathcal{Y}$ identify\footnote{The order of $s_0$ and $s_1$ does not essentially change the problem; however they must be fixed for a graph channel with state. For example, without loss of generality we may assume $\mathcal{Y}\subset\mathbb{Z}_{\geq 0}$ and $s_0<s_1$ to fix an order.} the two vertices incident to $s$. Then
\begin{align}
  \mN(y\mid x,s) = \mathbb{I}[y = s_x], 
\end{align}
for all $x\in \mathcal{X}, s\in \mathcal{S}, y\in \mathcal{Y}$.
\end{definition}
\begin{remark}
  Although our definition of a graph channel with state is expressed based on a graph, it should not be confused with the confusability graphs for a channel (without state) in zero-error channel capacity (e.g. \cite{Shannon56}), where two inputs (vertices) are connected by an edge if and only if they can (with positive probability) yield the same output.
\end{remark}

For an integer $m\geq 3$, let $\mathcal{C}_m$ denote the \emph{cyclic} graph with $m$ vertices, and $\mathcal{K}_m$ denote the \emph{complete} graph with $m$ vertices. Fig. \ref{fig:example_graph} illustrates two examples of such graph channels with state.

We write $\unif$ to represent the uniform distribution for the channel state, i.e., $\unif(s) = |\mathcal{S}|^{-1}$ for all $s$. We are now ready to present our first theorem.
\begin{theorem}[Cyclic graph $\mathcal{C}_5$]\label{thm:cyclic_graph}
  For the graph channel with state specified by $(\mathcal{C}_5,\unif)$, denote its classical capacity with causal CSIT as $C^{\C}(\mathcal{C}_5)$, and its entanglement-assisted capacity with causal CSIT as $C^{\EA}(\mathcal{C}_5)$. We have
  \begin{align}
    C^{\C}(\mathcal{C}_5) = 0.8,
  \end{align}
  and 
  \begin{align} \label{eq:QE_C5}
    C^{\EA}(\mathcal{C}_5) \geq 1-H_b\big(\cos^2(\pi/20)\big) \approx 0.8341.
  \end{align}
\end{theorem}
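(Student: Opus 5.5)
The classical part reduces to Shannon's causal-CSIT capacity formula, which turns the problem into a small entropy optimization. The entanglement-assisted part is an explicit single-letter scheme built on chained Bell correlations around the $10$-cycle formed by the vertices and edges of $\mathcal{C}_5$.

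\textbf{Classical capacity.} By Shannon's result for causal CSIT \cite{shannon_CSIT}, $C^{\C}=\max_{\mP_T} I(T;Y)$. Here $T$ ranges over the $2^5$ Shannon strategies $t\colon\mathcal{S}\to\{0,1\}$, and $Y=S_{t(S)}$ with $S\sim\unif$ independent of $T$. I will write $I(T;Y)=H(Y)-H(Y\mid T)$ and bound the two terms separately.

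For the converse, $H(Y)\le\log_2 5$. For each fixed $t$, the conditional law of $Y$ gives each vertex mass equal to (number of edges mapped to it)$/5$. A vertex of $\mathcal{C}_5$ has only two incident edges, so each vertex carries mass at most $2/5$. Among distributions on multiples of $1/5$ with all masses at most $2/5$, the entropy is minimized by the most concentrated one, $(2/5,2/5,1/5)$ (a majorization argument). Hence $H(Y\mid T=t)\ge H([0.4,0.4,0.2])$, and
\[
I(T;Y)\le \log_2 5-H([0.4,0.4,0.2])=0.8 .
\]

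For achievability, label the vertices $0,\dots,4$ and let $t_v$ be the strategy with the following images:
\begin{itemize}
\item edges $(v-1,v)$ and $(v,v+1)$ are mapped to $v$;
\item edges $(v+1,v+2)$ and $(v+2,v+3)$ are mapped to $v+2$;
\item edge $(v+3,v+4)$ is mapped to $v+4$.
\end{itemize}
This realizes the $(2,2,1)$ profile. Taking $T$ uniform over the five rotations $t_0,\dots,t_4$ makes $Y$ uniform, so $H(Y)=\log_2 5$ and the bound $0.8$ is attained.

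\textbf{Entanglement-assisted lower bound.} The plan is to use each channel use with a fresh maximally entangled qubit pair, turning it into a binary symmetric channel for a uniform bit $b$ with crossover probability $\sin^2(\pi/20)$. Standard i.i.d.\ channel coding (the encoder is causal, since it only needs $s_i$ and the coded bit at time $i$) then gives rate $1-H_b(\cos^2(\pi/20))$.

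The scheme runs as follows.
\begin{enumerate}
\item Arrange the $5$ edges and $5$ vertices alternately around a $10$-cycle, vertex $v$, edge $(v,v+1)$, vertex $v+1$, and so on.
\item Assign them real measurement angles $\theta_k=k\pi/20$, $k=0,\dots,9$, so that adjacent elements differ by $\pi/20$ and the closing pair differs by $9\pi/20=\pi/2-\pi/20$.
\item The transmitter, seeing $s$, measures its half at angle $\theta_s$, obtains outcome $a$, and sends $x=a\oplus b\oplus \sigma(s)$ for a fixed sign function $\sigma$.
\item The receiver sees $y=s_x$, measures at angle $\theta_y$, obtains $c$, and outputs $\hat b=c\oplus\tau(y)$.
\end{enumerate}
For the maximally entangled state, $\Pr(a=c)=\cos^2(\theta_s-\theta_y)$. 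Since $y$ is always incident to $s$, the angle difference is $\pi/20$ for every adjacent pair except the wrap-around pair, where it is $\pi/2-\pi/20$; there $a$ and $c$ are anti-correlated with the same fidelity $\cos^2(\pi/20)$.

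\textbf{Main obstacle.} The delicate step is the sign bookkeeping. Correctness requires $\hat b=b$ exactly when the measurement correlation succeeds, for every incident pair $(s,y)$. This amounts to the condition
\[
\sigma(s)\oplus\tau(y)\oplus x(s,y)\oplus\delta(s,y)=0,
\]
where $x(s,y)$ is the input selecting $y$ on edge $s$ (fixed by the ordering $s_0<s_1$), and $\delta(s,y)=1$ only on the wrap-around pair. This is a parity constraint on the $10$ edges of the cycle graph. It is solvable iff the sum of $x(s,y)\oplus\delta(s,y)$ around the cycle is even. With the ordering $s_0<s_1$, exactly one edge, $(4,0)$, contributes $x$-values $(1,0)$ rather than $(0,1)$, and the remaining edges contribute one $1$ each. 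I expect the odd cycle's twist from the labeling to be cancelled by the single anti-correlated closing pair, and I would verify this parity count explicitly.

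Given this, every $(s,y)$ pair yields $\Pr(\hat b=b)=\cos^2(\pi/20)$, independently across uses. This gives the binary symmetric channel and hence \eqref{eq:QE_C5}.
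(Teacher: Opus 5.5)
Your proposal is correct and follows essentially the paper's route. The converse is the same: $H(Y)\le\log_2 5$, and since each vertex lies on at most two edges, $H(Y\mid U=u)\ge H([\tfrac25,\tfrac25,\tfrac15])$. For achievability you average five rotated strategies, whereas the paper uses two strategies, $x(u,s)=u\oplus_2 s$, in the algebraic form $Y=X+S \bmod 5$; both make $Y$ uniform. The entanglement-assisted part uses the same idea: one Bell pair per channel use with chained correlations, giving a BSC with fidelity $\cos^2(\pi/20)$.

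The parity check you left open does close. Each state contributes exactly one $1$ to $\sum x(s,y)$ whatever the orientation, since $x(s,s_0)=0$ and $x(s,s_1)=1$, giving $5$; $\delta$ adds $1$, so the total is even. Your $\sigma,\tau$ relabeling is equivalent to the paper's choice of two interleaved pentagons of phase angles. That choice makes every $(s,s)$ pair correlated and every $(s,s\oplus_5 1)$ pair anti-correlated, so the paper needs no sign bookkeeping.
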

\noindent Theorem \ref{thm:cyclic_graph} provides an example of a channel with state for which the entanglement-assisted capacity surpasses its classical capacity. Notably, the channel is deterministic, meaning that the output $Y$ is determined by $(X,S)$.
 The proof of Theorem \ref{thm:cyclic_graph} is provided in Section  \ref{proof:C5}.  As a brief overview, the entanglement-assisted scheme achieving the rate in the RHS of \eqref{eq:QE_C5} works as follows. For each channel use, the transmitter and the receiver consume a Bell pair to convert the channel to a binary symmetric channel (BSC) $X'\to Y'$ with crossover probability $\Pr(Y'\neq X')= 1-\cos^2(\pi/20)$. Notably, the channel conversion process over the $i^{th}$ channel use depends on the the current state $s_i$, and $X_i'$, which is the current input of the converted channel, but does not depend on the past states $s^{i-1}$. See  Fig. \ref{fig:conversion} for an illustration of the channel conversion process. Since the capacity of a BSC with crossover  probability $1-p$ is $1-H_b(1-p) = 1-H_b(p)$, a classical BSC code can  be employed over the converted channel to achieve the rate on the RHS of \eqref{eq:QE_C5}.

\begin{figure}[htbp]
\center
\includegraphics[width=0.67\textwidth]{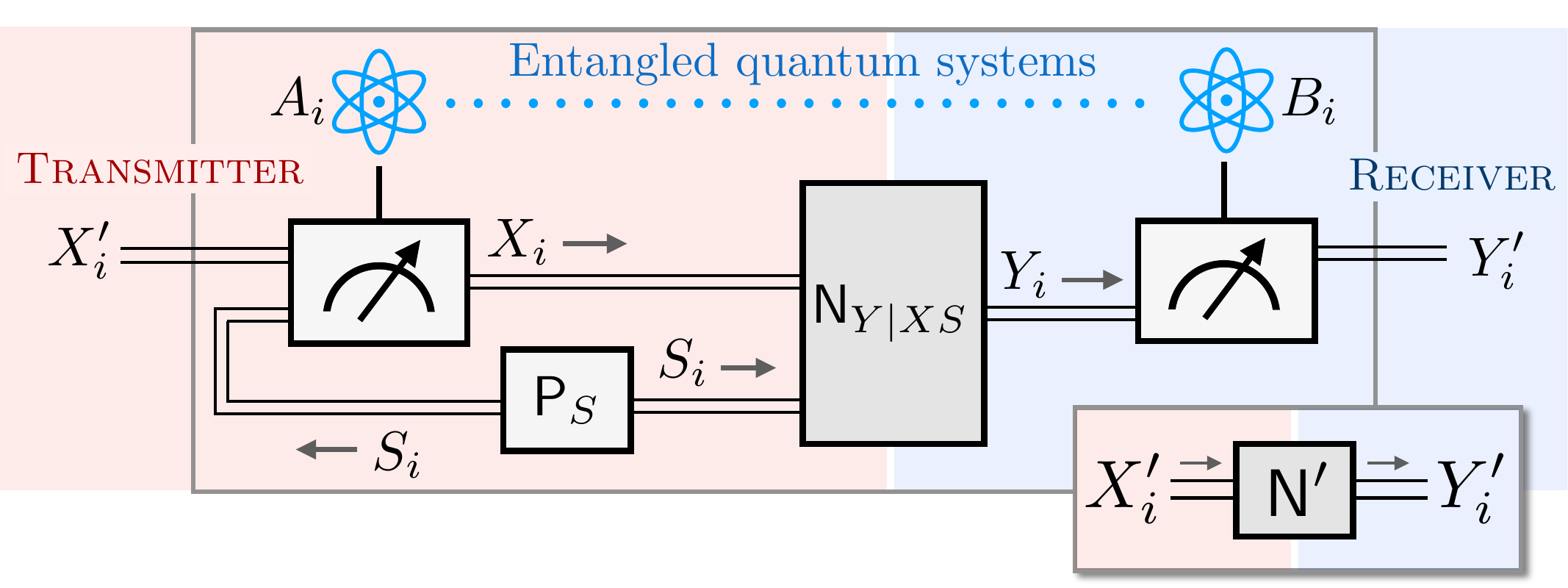}
\caption{A  strategy that converts each use (say, the $i^{th}$ use) of a channel $\mN$ with state  ($S_i$) and entanglement assistance ($A_iB_i$), informally $(A_i,S_i,X_i)\stackrel{\mN}{\longrightarrow} (B_i,Y_i)$, into an effective classical channel $\mN'$ with input $X_i'$ and output  $Y_i'$, that has no state and no entanglement assistance, informally $X_i'\stackrel{\mN'}{\longrightarrow} Y_i'$.}
\label{fig:conversion}
\end{figure}

To further generalize the findings, our next theorem identifies a family of channels with state where the gain (factor) from entanglement assistance  is even larger. We need the following definition.
\begin{definition}[Noisy version] \label{def:noisy}
  Given $p \in [0,1]$, and a channel with state $(\mN,\mP_S)$, the noisy version of the channel with state (depending on a parameter $p$), denoted as $(\mN_p, \mP_S)$, is defined by its conditional probability distribution
  \begin{align}
    \mN_p(y\mid x,s) = p\cdot \mN(y\mid x,s) + (1-p) \cdot |\mathcal{Y}|^{-1},
  \end{align}
  for all $x\in \mathcal{X}, s\in \mathcal{S}, y\in \mathcal{Y}$.
  Equivalently, $\mN_p$ can be thought of as replacing  the output of the channel $\mN$ with probability $1-p$ by a random noise uniformly distributed over $\mathcal{Y}$.
\end{definition}

\begin{theorem}[Noisy complete graph $\mathcal{K}_m$] \label{thm:complete_graph}
  For the noisy version of the graph channel with state $(\mathcal{K}_m,\unif)$, denote its classical capacity with causal CSIT as $C^{\C}(\mathcal{K}_m, p)$, and its entanglement-assisted capacity with causal CSIT as $C^{\EA}(\mathcal{K}_m, p)$. We have
  \begin{align} \label{eq:CC_Km}
    C^{\C}(\mathcal{K}_m, p) = \frac{1}{m}\sum_{k=1}^m \big(1+  pc_{m,k}  \big)\log_2 \big( 1+  pc_{m,k}  \big),
  \end{align}
  where $c_{m,k} \triangleq  \tfrac{m-2k+1}{m-1}$,
  and
  \begin{align} \label{eq:RQ_Km}
    C^{\EA}(\mathcal{K}_m,p) &\geq \max\big\{C^{\C}(\mathcal{K}_m,p),~1-H_b(q_{m,p}) \big\} \notag \\
    &\triangleq R^{\EA}(\mathcal{K}_m,p),
  \end{align}
  where $q_{m,p}\triangleq \big( \tfrac{\cot(\pi/2m)}{2(m-1)}+\tfrac{1}{2}\big)p + \tfrac{1}{2}(1-p)$.
\end{theorem}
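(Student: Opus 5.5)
The plan has two parts: an exact single-letter evaluation of $C^{\C}(\mathcal{K}_m,p)$, and an explicit entanglement-assisted channel-conversion scheme in the spirit of the overview after Theorem \ref{thm:cyclic_graph}.

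\textbf{Classical capacity.} With causal CSIT, Shannon's result \cite{shannon_CSIT} gives $C^{\C}=\max_{\mP_T} I(T;Y)$. Here $T$ ranges over Shannon strategies $t\colon\mathcal{S}\to\mathcal{X}$.
\begin{enumerate}
\item For $\mathcal{K}_m$, a strategy picks one endpoint of every edge, i.e.\ it is a tournament on $m$ vertices. Let $d_t(y)$ be the number of edges whose chosen endpoint is $y$ (the score of $y$). Under $\unif$ and noise level $p$,
\begin{align*}
\Pr(Y=y\mid T=t)=p\,\frac{d_t(y)}{\binom{m}{2}}+\frac{1-p}{m}.
\end{align*}
\item This gives the upper bound $I(T;Y)\le \log_2 m-\min_t H(Y\mid T=t)$.
\item By Landau's theorem, every score vector is majorized by the transitive score vector $(0,1,\dots,m-1)$. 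The map from scores to output distributions is affine and permutation-equivariant, so majorization is preserved. Since entropy is Schur-concave, the minimum of $H(Y\mid T=t)$ is attained by a transitive tournament.
\item The bound is achieved by taking $T$ uniform over the $m!$ relabelings of the transitive tournament. Then $Y$ is uniform by symmetry, so $H(Y)=\log_2 m$.
\item For the transitive tournament, the vertex with score $k-1$ has output probability
\begin{align*}
\tfrac{1}{m}\Big(1+p\big(\tfrac{2(k-1)}{m-1}-1\big)\Big)=\tfrac{1}{m}\big(1-pc_{m,k}\big).
\end{align*}
Reindexing $k\mapsto m+1-k$ gives $\log_2 m-H(Y\mid T=t)=\tfrac1m\sum_k(1+pc_{m,k})\log_2(1+pc_{m,k})$, which is \eqref{eq:CC_Km}.
\end{enumerate}

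\textbf{Entanglement-assisted lower bound.} Since classical schemes are special cases, $C^{\EA}\ge C^{\C}$. It then suffices to show that the rate $1-H_b(q_{m,p})$ is achievable.

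As for $\mathcal{C}_5$, each channel use consumes one Bell pair $|\Phi^+\rangle$ to convert the channel into a memoryless BSC $X'\to Y'$ that has no state and is independent across uses. A standard random BSC code then achieves $1-H_b(q_{m,p})$.

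The conversion works as follows.
\begin{enumerate}
\item Assign to each vertex $v\in\{0,\dots,m-1\}$ a real qubit measurement direction at angle $\theta_v=v\pi/(2m)$; this is the chained-Bell arrangement.
\item On state $s=\{a,b\}$ with $a<b$, the transmitter measures its qubit along a direction bisecting $\theta_a$ and $\theta_b$ (suitably rotated), obtaining a bit $o$. It outputs $x=o\oplus x'$, i.e.\ it selects the endpoint $s_x$.
\item The receiver sees $y$, measures its qubit along the direction attached to $y$, and XORs the outcome with a fixed bit depending on whether $y$ was the smaller or larger endpoint. When noise replaces the output, the receiver's guess is correct with probability exactly $1/2$, so the noisy case follows by linearity: $q=p\,q_{m,1}+\tfrac12(1-p)$.
\end{enumerate}

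\textbf{Main obstacle.} The crux is computing the noiseless success probability $q_{m,1}$.
\begin{enumerate}
\item For a pair at angular separation $j\pi/(2m)$, the correlation $\langle A B\rangle$ contributes a term of the form $\sin$ or $\cos$ of $j\pi/(2m)$. There are $m-j$ such pairs.
\item Averaging over the $\binom m2$ edges should give
\begin{align*}
q_{m,1}=\tfrac12+\tfrac{1}{m(m-1)}\sum_{j=1}^{m-1}(m-j)\sin\!\big(\tfrac{j\pi}{m}\big)\cdot(\text{const}).
\end{align*}
The trigonometric identity $\sum_{j=1}^{m-1}(m-j)\sin(j\pi/m)=\tfrac m2\cot(\pi/2m)$ then yields $\tfrac{\cot(\pi/2m)}{2(m-1)}+\tfrac12$.
\item The delicate part is choosing the transmitter's angle per edge and the receiver's sign convention so that every pair contributes with the correct sign. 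A naive bisector choice gives $\cos$ of half the separation, which must be reconciled with the $\sin(j\pi/m)$ form. I would first verify the $m=3$ case numerically, then fix the conventions and prove the identity via the imaginary part of a geometric-series sum.
\end{enumerate}
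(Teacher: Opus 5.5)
Your classical-capacity argument is correct. Observing that any $k$ vertices of a tournament have total score at least $\binom{k}{2}$ gives majorization by the transitive score vector more directly than the paper's sequential edge-flipping proof of its minimization lemma. The paper attains the bound with just the two opposite transitive orders ($U$ uniform on $\{0,1\}$, $x(u,s)=u$), rather than all $m!$ relabelings.

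The gap is in the entanglement-assisted part. The conversion scheme is never pinned down, and the one you sketch cannot reach $q_{m,1}$.

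First, the receiver cannot XOR with a bit ``depending on whether $y$ was the smaller or larger endpoint.'' It sees only $y$, and for $0<y<m-1$ the vertex $y$ is the smaller endpoint of some edges and the larger of others. That bit is therefore $x$ itself, which the receiver does not know. The endpoint-order sign must be carried by the transmitter instead. On state $s=(s_0,s_1)$, its direction should be positively correlated with the receiver's direction for $s_0$ and negatively correlated with the one for $s_1$, so that $Y'=o\oplus x$, i.e.\ $Y'=x'$, with high probability. The best such direction is perpendicular to the bisector, not along it. Both correlations then equal $\sin(\Delta/2)$, where $\Delta$ is the Bloch-sphere separation of the two receiver directions.

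Second, your directions $\theta_v=v\pi/(2m)$ (real-vector angles, i.e.\ Bloch angles $v\pi/m$) occupy only half a great circle. An edge with gap $\ell=s_1-s_0$ thus has $\Delta=\ell\pi/m$ and success probability $\tfrac12\bigl(1+\sin(\ell\pi/(2m))\bigr)$, not $\tfrac12\bigl(1+\sin(\ell\pi/m)\bigr)$. Unless the receiver additionally re-signs its outcome as a function of $y$ alone, this is a genuine loss. For $m=3$ the average is about $0.81$, versus $r_3=\tfrac12+\tfrac{\sqrt3}{4}\approx 0.93$. (For odd $m$, flipping whenever $y$ is odd maps your directions onto the regular $m$-gon; for even $m$ no flip pattern does.)

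The missing idea, which is what the paper does, is to spread the $m$ receiver directions uniformly over the whole equator: $\eta_y=2\pi y/m$, implemented with phase gates on one Bell pair per use followed by a $\{|+\rangle,|-\rangle\}$ measurement. The transmitter uses $\xi_s=\tfrac12(\eta_{s_0}+\eta_{s_1}-\pi)$, i.e.\ perpendicular to the bisector and oriented toward $\eta_{s_0}$, and the receiver outputs its raw outcome. Then $\eta_{s_0}-\xi_s=\tfrac{\pi}{2}-\tfrac{\pi\ell}{m}$ and $\eta_{s_1}-\xi_s=\tfrac{\pi}{2}+\tfrac{\pi\ell}{m}$. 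Hence $\Pr(Y'=X'\mid X=0,S=s)=\cos^2\bigl(\tfrac{\pi}{4}-\tfrac{\pi\ell}{2m}\bigr)$ and $\Pr(Y'=X'\mid X=1,S=s)=\sin^2\bigl(\tfrac{\pi}{4}+\tfrac{\pi\ell}{2m}\bigr)$, both equal to $\tfrac12\bigl(1+\sin(\pi\ell/m)\bigr)$. The transmitter's outcome is uniform and independent of $S$, so $X$ is independent of $S$. Averaging over the $m-\ell$ edges with each gap $\ell$ gives $\tfrac12+\tfrac{1}{m(m-1)}\sum_{\ell=1}^{m-1}(m-\ell)\sin(\pi\ell/m)$. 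Your identity (pair $k$ with $m-k$ and use $\sum_{k=1}^{m-1}\sin(k\pi/m)=\cot(\pi/2m)$) turns this into $r_m=\tfrac12+\tfrac{\cot(\pi/2m)}{2(m-1)}$. Your noise-averaging step then yields $q_{m,p}$ as claimed.
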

\noindent The proof of Theorem \ref{thm:complete_graph} is presented in Appendix \ref{proof:complete_graph}. Cases associated with $\mathcal{K}_8$ are illustrated in Fig. \ref{fig:rate_comparison}.

\begin{corollary} \label{cor:gain_factor}
  As $p\to 0^+$, i.e., when the channel is almost completely noisy, the limit
  \begin{align} \label{eq:ratio_complete_graph}
    \lim_{p \to 0^+} \frac{R^{\EA}(\mathcal{K}_m,p)}{C^{\C}(\mathcal{K}_m, p)} = \frac{3\cot^2\big( \tfrac{\pi}{2m} \big)}{m^2-1},
  \end{align}
  and for $m\to \infty$ the ratio in \eqref{eq:ratio_complete_graph} approaches  $\tfrac{12}{\pi^2}> 1.2158$.
\end{corollary}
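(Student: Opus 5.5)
The plan is to expand both $C^{\C}(\mathcal{K}_m,p)$ and $1-H_b(q_{m,p})$ to second order in $p$ around $p=0$, using the closed forms in Theorem \ref{thm:complete_graph}. Both quantities vanish to first order, so the ratio is governed by the $p^2$ coefficients. The $\max$ in $R^{\EA}$ is then resolved by checking that the entanglement-assisted term dominates for every $m\ge 3$.

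\emph{Classical side.} Use $(1+x)\ln(1+x)=x+\tfrac{x^2}{2}+O(x^3)$ with $x=pc_{m,k}$. This gives
\begin{align*}
C^{\C}(\mathcal{K}_m,p)=\frac{1}{m\ln 2}\Big(p\sum_{k}c_{m,k}+\frac{p^2}{2}\sum_k c_{m,k}^2\Big)+O(p^3).
\end{align*}
The values $m-2k+1$, for $k\in[m]$, are $m-1,m-3,\dots,-(m-1)$. They are symmetric about $0$, so $\sum_k c_{m,k}=0$. Their squares sum to $m(m^2-1)/3$, hence $\sum_k c_{m,k}^2=\tfrac{m(m+1)}{3(m-1)}$. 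Therefore
\begin{align*}
C^{\C}(\mathcal{K}_m,p)=\frac{(m+1)\,p^2}{6(m-1)\ln 2}+O(p^3).
\end{align*}

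\emph{Entanglement-assisted side.} Write $q_{m,p}=\tfrac12+\delta$ with $\delta=\tfrac{p\cot(\pi/2m)}{2(m-1)}$. The standard expansion $1-H_b(\tfrac12+\delta)=\tfrac{2\delta^2}{\ln 2}+O(\delta^4)$ then gives
\begin{align*}
1-H_b(q_{m,p})=\frac{\cot^2(\pi/2m)\,p^2}{2(m-1)^2\ln 2}+O(p^4).
\end{align*}
Dividing the two leading coefficients, the $\ln 2$ factors cancel and one factor of $m-1$ cancels:
\begin{align*}
\lim_{p\to 0^+}\frac{1-H_b(q_{m,p})}{C^{\C}(\mathcal{K}_m,p)}=\frac{3\cot^2(\pi/2m)}{m^2-1}.
\end{align*}
Consequently the limit of $R^{\EA}/C^{\C}$ equals $\max\{1,\,3\cot^2(\pi/2m)/(m^2-1)\}$.

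\emph{Resolving the max.} It remains to show $3\cot^2(\pi/2m)\ge m^2-1$ for $m\ge 3$. Use the inequality $\cot^2 x\ge \tfrac{1}{x^2}-\tfrac23$ for $x\in(0,\pi/2)$, which follows from the Laurent expansion of $\cot$ or from $\tan x\le x/(1-x^2/3)$. This reduces the claim to $\tfrac{12m^2}{\pi^2}-2\ge m^2-1$, that is, $m^2(12/\pi^2-1)\ge 1$, which holds for $m\ge 3$. As a sanity check, $m=3$ gives $9/8$ directly. For $m\to\infty$, $\cot(\pi/2m)\sim 2m/\pi$, so the ratio tends to $12/\pi^2\approx1.2159$.

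The only delicate step is this last inequality, which lets us drop the $\max$. If the Laurent-type bound proves awkward to justify cleanly, I would fall back on checking small $m$ directly and using the asymptotic bound for large $m$. The expansions themselves are routine.
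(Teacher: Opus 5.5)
Your proposal is correct and follows the paper's route. Both arguments use second-order expansions of $1-H_b(q_{m,p})$ around $q=\tfrac12$ and of $C^{\C}(\mathcal{K}_m,p)$, using $\sum_k c_{m,k}=0$ and $\sum_k c_{m,k}^2=\tfrac{m(m+1)}{3(m-1)}$, and your leading coefficients match the paper's.

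The one difference is in your favor. The paper writes $R^{\EA}(\mathcal{K}_m,p)=1-H_b(q_{m,p})$ and never deals with the $\max$ against $C^{\C}$. You correctly note that the limit is $\max\{1,\,3\cot^2(\pi/2m)/(m^2-1)\}$ and verify $3\cot^2(\pi/2m)\ge m^2-1$ for $m\ge 3$. The stated formula actually needs this check.

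One correction there: your fallback inequality $\tan x\le x/(1-x^2/3)$ is false. For example, $\tan 1\approx 1.557>1.5$; the inequality goes the other way. Justify $\cot^2 x\ge 1/x^2-\tfrac23$ instead via $\csc^2 x=\tfrac{1}{x^2}+\tfrac13+\tfrac{x^2}{15}+\cdots$. All coefficients of this series are nonnegative for $0<|x|<\pi$; equivalently, differentiate the Laurent series of $\cot$, whose terms after $1/x$ all have negative coefficients.
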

\noindent The proof of Corollary \ref{cor:gain_factor} is presented in Appendix \ref{proof:gain}. Evidently, there is a channel with state for which entanglement assistance can improve upon the classical capacity by at least $21.58\%$. 

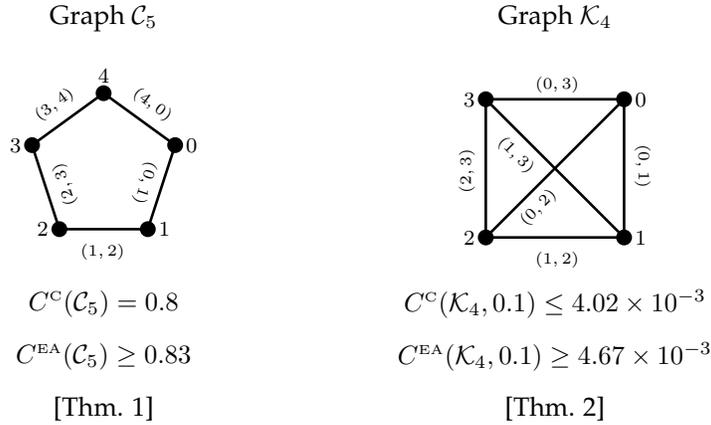
\begin{figure}[h]
\centering
\begin{tikzpicture}[scale=1, line width=1pt]
\begin{scope}
\foreach \i in {0,...,4} {
    \coordinate (v\i) at ({72*\i+18}:1);
  }
  \draw (v0)--(v1)--(v2)--(v3)--(v4)--cycle;
  \foreach \i in {0,...,4} {
    \fill (v\i) circle (3pt);
  }
  \node[font=\scriptsize, right] at (v0) {$0$};
  \node[font=\scriptsize, above] at (v1) {$4$};
  \node[font=\scriptsize, left] at (v2) {$3$};
  \node[font=\scriptsize, left] at (v3) {$2$};
  \node[font=\scriptsize, right] at (v4) {$1$};
  \node[font=\tiny, rotate=36] at ($(v1.west)+(-0.65,-0.15)$) {$(3,4)$};
  \node[font=\tiny, rotate=0] at ($(v4.east)+(-0.6,-0.3)$) {$(1,2)$};
  \node[font=\tiny, rotate=-36] at ($(v1.east)+(0.65,-0.15)$) {$(4,0)$};  
  \node[font=\tiny, rotate=-108] at ($(v2.east)+(1.5,-0.5)$) {$(0,1)$}; 
  \node[font=\tiny, rotate=108] at ($(v4.east)+(-1.1,0.6)$) {$(2,3)$};
  \node[font=\small, rotate=0] at (0,2) {Graph $\mathcal{C}_5$};
  
  \node at (0,-1.8) {\small $C^{\C}(\mathcal{C}_5) = 0.8$};
  \node at (0,-2.5) {\small $C^{\EA}(\mathcal{C}_5) \geq 0.83$};
  \node at (0,-3.2) {\small [Thm. \ref{thm:cyclic_graph}]};
\end{scope}

\begin{scope}[shift={(6,0)}]
  \foreach \i in {0,...,3} {
    \coordinate (v\i) at ({90*\i + 45}:1.3);
    \fill (v\i) circle (3pt);
  }
  \draw (v0) -- (v1);
  \draw (v1) -- (v2);
  \draw (v2) -- (v3);
  \draw (v3) -- (v0);
  \draw (v0) -- (v2);
  \draw (v1) -- (v3);
  \node[font=\scriptsize, right] at (v0) {$0$};
  \node[font=\scriptsize, left] at (v1) {$3$};
  \node[font=\scriptsize, left] at (v2) {$2$};
  \node[font=\scriptsize, right] at (v3) {$1$};
  \node[font=\tiny, rotate=0] at ($(v1.east)+(0.95,0.2)$) {$(0,3)$};
  \node[font=\tiny, rotate=0] at ($(v2.east)+(0.95,-0.3)$) {$(1,2)$};
  \node[font=\tiny, rotate=90] at ($(v1.west)+(-0.25,-0.95)$) {$(2,3)$};
  \node[font=\tiny, rotate=-90] at ($(v3.east)+(0.25,0.95)$) {$(0,1)$};
  \node[font=\tiny, rotate=-45] at ($(v2.east)+(0.4,1.1)$) {$(1,3)$};
  \node[font=\tiny, rotate=45] at ($(v2.east)+(0.7,0.4)$) {$(0,2)$};
  \node[font=\small, rotate=0] at (0,2) {Graph $\mathcal{K}_4$};
  
  \node at (0,-1.8) {\small $C^{\C}(\mathcal{K}_4,0.1) \leq 4.02\times 10^{-3}$};
  \node at (0,-2.5) {\small $C^{\EA}(\mathcal{K}_4,0.1) \geq 4.67\times 10^{-3}$};
  \node at (0,-3.2) {\small [Thm. \ref{thm:complete_graph}]};
\end{scope}
\end{tikzpicture}
\caption{Two examples of graph channels with state. The left-hand side shows $\mathcal{C}_5$ and the right-hand side shows $\mathcal{K}_4$. For each graph channel with state, each vertex (corresponding to an output symbol) is labeled by a value $y\in \mathcal{Y}$ (without loss of generality we let $\mathcal{Y} = \{0,1,\cdots, m-1\}$), and each edge (corresponding to a channel state) is labeled by a tuple $s=(s_0,s_1)$, where $s_0, s_1\in \mathcal{Y}$ denote the two endpoints of the edge $s$. Given the channel state $S=s$ and input $X\in \{0,1\}$, the receiver observes the output $Y=s_X$.}
\label{fig:example_graph}
\end{figure}

\begin{figure}
\centering
\begin{tikzpicture}
\definecolor{color1}{rgb}{0.45,0.45,0.45}
\definecolor{color2}{rgb}{0.20,0.45,0.75}
\definecolor{color3}{rgb}{0.85,0.40,0.60}

\node (myfirstpic) at (0,0) {\includegraphics[width=0.67\textwidth]{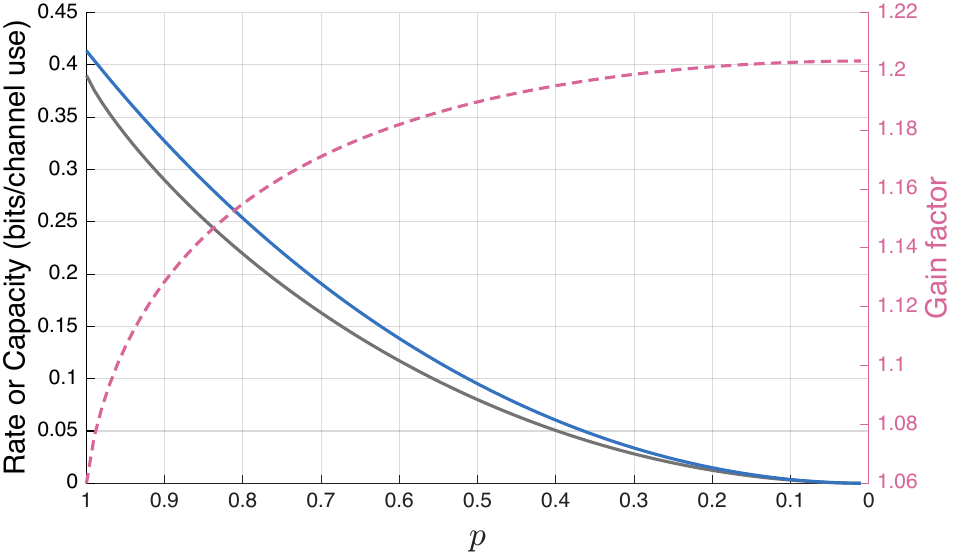}};
\draw[{latex}-, thick, color1] (-1.9,-0.4) -- (-2.5,-1.2)  node [right=-0.4, pos=1.4] {\small $C^{\C}(\mathcal{K}_8,p)$};
\draw[{latex}-, thick, color2] (-1.2,-0.45) -- (-0.6,0.35)  node [right=-0.4, pos=1.3] {\small $R^{\EA}(\mathcal{K}_8,p)$};
\draw[{latex}-, thick, color3] (0,1.95) -- (1,1.95)  node [right] {\footnotesize $R^{\EA}(\mathcal{K}_8,p)/C^{\C}(\mathcal{K}_8,p)$};

\end{tikzpicture}
\caption{Plots of the classical capacity $C^{\C}(\mathcal{K}_8,p)$ (Eq. \eqref{eq:CC_Km}), achievable rate by entanglement assistance $R^{\EA}(\mathcal{K}_8,p)$ (Eq. \eqref{eq:RQ_Km}), and the gain factor $R^{\EA}(\mathcal{K}_8,p)/C^{\C}(\mathcal{K}_8,p)$, for the graph channel with state associated with $\mathcal{K}_8$, as functions of $p$.}
\label{fig:rate_comparison}
\end{figure}

\subsection{Zero-error capacity with causal CSIT can be activated by entanglement assistance}
Next, we shift our focus from the classical capacity to the zero-error capacity. The entanglement-assisted zero-error capacity for a channel without state is studied by \cite{cubitt2010improving}, and it is shown that if the one-shot zero-error capacity is $0$, then it remains at $0$ even with entanglement assistance. In other words, the one-shot zero-error capacity of a classical channel (without state) cannot be \emph{activated} by entanglement assistance.  A natural question is then to ask whether the statement still holds in the setting of channels with state. To address this question, we first look at the graph channels with state, and we have the following theorem.
\begin{theorem} \label{thm:no_activation_for_graph}
  Let $c_0^{\C}(\mathcal{G})$ (resp. $c_0^{\EA}(\mathcal{G})$) denote the classical (resp. entanglement-assisted) one-shot zero-error capacity of a graph channel with state defined by $\mathcal{G}$. Then, $c_0^{\C}(\mathcal{G})= 0 \Leftrightarrow C_0^{\C}(\mathcal{G})= 0 \Leftrightarrow$ ``$\mathcal{G}$  ~is~not~bipartite\footnote{A graph $\mathcal{G}(\mathcal{Y}, \mathcal{S})$ is bipartite if its vertex set $\mathcal{Y}$ can be partitioned into two disjoint sets $\mathcal{Y}_1$ and $\mathcal{Y}_2$ such that every edge in $\mathcal{S}$ has one endpoint in $\mathcal{Y}_1$ and the other in $\mathcal{V}_2$.}" $\Leftrightarrow c_0^{\EA}(\mathcal{G}) = 0$. 
\end{theorem}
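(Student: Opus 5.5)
Since $c_0^{\C}\le c_0^{\EA}$ (entanglement-assisted schemes contain classical ones), and $c_0^{\C}\le C_0^{\C}$ (take $n=1$ in the supremum), we have $c_0^{\EA}=0\Rightarrow c_0^{\C}=0$ and $C_0^{\C}=0\Rightarrow c_0^{\C}=0$. It therefore suffices to prove two implications: (i) if $\mathcal{G}$ is bipartite then $c_0^{\C}(\mathcal{G})\geq 1$, which gives $c_0^{\C}>0$, $C_0^{\C}>0$ and $c_0^{\EA}>0$; and (ii) if $\mathcal{G}$ is not bipartite then $C_0^{\C}(\mathcal{G})=0$ and $c_0^{\EA}(\mathcal{G})=0$. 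Chaining these gives all the equivalences.

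\emph{Bipartite case.} Let $\mathcal{Y}=\mathcal{Y}_1\cup\mathcal{Y}_2$ be a bipartition. To send $w\in\{1,2\}$ in one channel use, on state $s$ the transmitter chooses the $x\in\{0,1\}$ with $s_x\in\mathcal{Y}_w$. Such an $x$ exists and is unique because every edge has exactly one endpoint in each part. The receiver decodes $w$ as the part containing $Y$. This scheme has zero error, so $M_{1,\opt}^{\C}\ge 2$.

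\emph{Non-bipartite, classical, $n$ uses.} The graph contains an odd cycle $v_0,v_1,\dots,v_{2k}$ with edges $e_j=\{v_j,v_{j+1}\}$ (indices mod $2k+1$). We show that any zero-error deterministic scheme (Remark \ref{rem:det_coding}) has $M=1$. Suppose messages $w\ne w'$ are both used. Zero error requires that for every state sequence, the output sequences $y^n(w,s^n)$ and $y^n(w',s^n')$ never coincide, for all $s^n,s'^n$. We build state sequences $s^n$ and $s'^n$ that produce identical outputs, proceeding inductively over time. Suppose that at time $i$ both encoders have so far produced equal outputs; their current maps are then $\phi_i(\cdot)$ and $\phi'_i(\cdot)$ on states. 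Restricted to the cycle edges, each map picks for every edge $e_j$ one endpoint. The picks of a single map cannot all be distinct, since that would be a system of distinct representatives on an odd cycle, i.e.\ a bijection from edges to vertices. So the issue is the interaction of the two maps. The cleaner argument is as follows. If for every edge $e$ and $e'$ the chosen endpoints differ, then the set $A=\{\phi_i(e)\}$ and the set $B=\{\phi'_i(e')\}$ are disjoint and every edge meets both. Hence $A\cup B$ covers each edge with one endpoint in $A$ and the other in $B$, which is a proper 2-colouring of the odd cycle: a contradiction. So there exist edges $e,e'$ with the same output, and we set $s_i=e$, $s'_i=e'$ and continue. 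After $n$ steps the outputs coincide, contradicting zero error. Hence $M_{n,\opt}^{\C}=1$ for all $n$, and $C_0^{\C}=0$.

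\emph{Entanglement-assisted, one shot.} I expect this step to be the main obstacle, because it must use non-signaling rather than a combinatorial choice. Fix $w\ne w'$. Let $A_w=\{y:\Pr(Y=y\mid w,s)>0 \text{ for some } s\}$. Zero error forces the receiver's POVM for output $y$ to perfectly discriminate between $w$ and $w'$. By the argument above, each state $s$ has at least one endpoint in the support under $w$, so $A_w$ is a vertex cover, and likewise for $A_{w'}$.

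The key step is to show that $A_w\cap A_{w'}\neq\emptyset$ leads to an error. If $y$ is reachable under both messages, the decoder on $B$ must distinguish the post-measurement states $\rho_{B\mid w,s,y}$ and $\rho_{B\mid w',s',y}$ perfectly. By non-signaling, the receiver's marginal $\rho_B$ is the same regardless of $(w,s)$. Writing $\rho_B=\sum_x \Pr(x\mid w,s)\rho_{B\mid w,s,x}$, each of these conditional states is dominated by $\rho_B$, so their supports lie in $\supp\rho_B$. Perfect discrimination then requires these states to have orthogonal supports. The plan is to sum over vertices: the decoder's POVM element $D_{w\mid y}$ must contain, for each edge, the support of the branch landing on $y$. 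Take the trace with $\rho_B$ and sum around the odd cycle, in the style of the proof by Cubitt et al.\ \cite{cubitt2010improving}. Each edge $e_j$ contributes total weight $1$ split between $v_j$ and $v_{j+1}$ under message $w$, and similarly under $w'$. Orthogonality forces $\Tr[D_{w\mid y}\rho_B]+\Tr[D_{w'\mid y}\rho_B]\le 1$ wherever both messages are possible. Summing these constraints around the cycle, with the per-edge equalities, gives a parity contradiction: an odd cycle cannot be fractionally 2-coloured with weights summing to $1$ on each edge while both colour classes saturate. Pinning down this fractional inequality exactly is the step I expect to require the most care.
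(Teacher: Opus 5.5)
Your reduction to (i) and (ii), the bipartite scheme, and the direct classical $n$-use argument are all correct. The $n$-use argument is a valid self-contained alternative to the paper's route, which gets $c_0^{\C}=0$ from $c_0^{\EA}=0$ and then invokes Lemma~\ref{lem:singleletter_classical}. Your discarded aside is false, though: $e_j\mapsto v_j$ is a bijection from cycle edges to cycle vertices.

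\textbf{The gap: the entanglement-assisted step.} First, ``$A_w\cap A_{w'}\neq\emptyset$ leads to an error'' is the wrong target. With entanglement, a shared output can be disambiguated by the receiver's post-measurement state; this is exactly what happens in Theorem~\ref{thm:activation}.

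More seriously, the numerical constraints you plan to sum cannot yield a contradiction. Write $\alpha_{x|s},\beta_{x|s}$ for the receiver's unnormalized conditional states under $w,w'$. Your constraints are:
\begin{itemize}
\item $\Tr[\alpha_{0|s}]+\Tr[\alpha_{1|s}]=1$ on each edge, and likewise for $\beta$;
\item $\Tr[\alpha_{x|s}]\le \Tr[D_{w|y}\rho_B]$ and $\Tr[\beta_{x'|s'}]\le \Tr[D_{w'|y}\rho_B]$ for branches landing on $y$;
\item $\Tr[D_{w|y}\rho_B]+\Tr[D_{w'|y}\rho_B]\le 1$.
\end{itemize}
All of these hold when every branch has weight $1/2$ and $\Tr[D_{w|y}\rho_B]=\Tr[D_{w'|y}\rho_B]=1/2$. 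In fact, summing them around the odd cycle forces exactly this point, with all inequalities tight, rather than excluding it. So there is no ``fractional parity'' contradiction: the obstruction is operator-level, not numerical.

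\textbf{The missing idea.} You use non-signaling only through traces. What is needed is to convert the \emph{operator} identity $\alpha_{0|s}+\alpha_{1|s}=\rho_B=\beta_{0|s'}+\beta_{1|s'}$, together with the orthogonality of branches landing on a common vertex, into \emph{support} information. The paper does this as follows.
\begin{enumerate}
\item Lemma~\ref{lem:inclusion}: if $\rho_0+\rho_1=\sigma_0+\sigma_1$ and $\rho_0\perp\sigma_1$, then $\Pi_{\rho_0}\sigma_0\Pi_{\rho_0}\succeq\rho_0$. Hence $\Tr[\rho_0]\le\Tr[\Pi_{\rho_0}\sigma_0]\le\Tr[\sigma_0]$.
\item Applied to consecutive cycle edges with alternating messages, this gives a closed chain of length $2m$. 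Because $m$ is odd, the chain starting at $\alpha_{0|e}$ passes through $\beta_{0|e}$ (same edge, same endpoint) before returning.
\item Closure makes every trace inequality in the chain tight. Tightness of $\Tr[\Pi_{\rho}\sigma]\le\Tr[\sigma]$ gives $\supp\sigma\subseteq\supp\rho$, so all supports in the chain coincide; in particular $\supp\alpha_{0|e}=\supp\beta_{0|e}$.
\item These two branches land on the same vertex, so they are orthogonal, hence both are zero.
\item The analogous chain for the other endpoint gives $\alpha_{1|e}=0$. Then $\rho_B=\alpha_{0|e}+\alpha_{1|e}=0$, a contradiction.
\end{enumerate}
Your all-$1/2$ point is precisely where the trace chain is tight. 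The contradiction appears only once that tightness is turned into equality of supports.
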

\noindent The proof of Theorem \ref{thm:no_activation_for_graph} is presented in Appendix \ref{proof:no_activation_graph}. It follows from Theorem \ref{thm:no_activation_for_graph}  that for any \emph{graph channel with state}, the one-shot zero-error capacity cannot be activated by entanglement assistance. We do not know whether the conditions also imply $C^{\EA}(\mathcal{G}) = 0$, particularly because our proof of $c^{\EA}(\mathcal{G}) = 0$ makes use of the structure of graph channels with state. The argument cannot be extended to multiple uses of a channel with state by viewing it as a bigger channel, particularly because the bigger channel no longer falls into the class of graph channels with state.

However, surprisingly, there do exist other classical channels with state whose one-shot zero-error capacity and zero-error capacity (with causal CSIT) can be activated by entanglement assistance. To show such an example, we need the following definition of a bipartite Kochen-Specker (B-KS) set. 
\begin{definition}[B-KS set \cite{BPQS}]
  A B-KS set  in $\mathbb{C}^d$ is specified by a tuple $(\{\mathcal{A}_i\}_{i=1}^a,\{\mathcal{B}_{j}\}_{j=1}^{b})$, where for each $i\in [a]$, $\mathcal{A}_i$ is a complete orthogonal basis of $\mathbb{C}^d$, and for each $j\in [b]$,  $\mathcal{B}_j$ is a complete orthogonal basis of $\mathbb{C}^d$. The bases satisfy the property that given any selections of vectors $u_1\in \mathcal{A}_1, u_2\in \mathcal{A}_2,\cdots, u_a\in \mathcal{A}_a, v_1\in \mathcal{B}_1, v_2\in \mathcal{B}_2,\cdots, v_b\in \mathcal{B}_b$, there exist $i\in[a]$ and $j\in [b]$ such that $u_i$ and $v_{j}$ are orthogonal.
\end{definition}

Reference \cite{BPQS} provides a general method to construct a B-KS set given a bipartite perfect quantum strategy of a non-local game. For example, the magic square game  \cite{mermin1990simple,peres1990incompatible},  yields the following B-KS set \cite[Eq. (9a) -- (10c)]{BPQS}.
\begin{equation} \label{eq:BKS}
\begin{aligned}
  \mathcal{A}_1 &= \{(1,0,0,0), (0,1,0,0),(0,0,1,0),(0,0,0,1)\}\\
  \mathcal{A}_2 &= \{(1,1,1,1),(1,-1,1,-1),(1,1,-1,-1),(1,-1,-1,1)\}\\
  \mathcal{A}_3 &= \{(1,1,1,-1),(1,1,-1,1),(1,-1,1,1),(-1,1,1,1)\}\\
  \mathcal{B}_1 &= \{(1,1,0,0),(1,-1,0,0),(0,0,1,1),(0,0,1,-1)\}\\
  \mathcal{B}_2 &= \{(1,0,1,0),(0,1,0,1),(1,0,-1,0),(0,1,0,-1)\}\\
  \mathcal{B}_3 &= \{(1,0,0,1),(1,0,0,-1),(0,1,1,0),(0,1,-1,0)\}
\end{aligned}
\end{equation}

\begin{theorem}[Activation] \label{thm:activation}
  For any B-KS set in $\mathbb{C}^d$, $(\{\mathcal{A}_i\}_{i=1}^a,\{\mathcal{B}_{j}\}_{j=1}^{b})$, one can construct a classical channel with state $\mN$ with,
  \begin{align}
    &c_0^{\C}(\mN) = C_0^{\C}(\mN) = 0,\\ 
    & c_0^{\EA}(\mN) = C_0^{\EA}(\mN) = 1.
  \end{align}
\end{theorem}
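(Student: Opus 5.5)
We will build the channel so that the state is a pair $s=(i,j)\in[a]\times[b]$, the input is a vector of the basis $\mathcal{A}_i$ (for instance $x=(i,u)$ with $u\in\mathcal{A}_i$; other inputs are handled through Remark~\ref{rem:input_constraint}), and the receiver sees $j$ together with a one-bit function of $u$. Concretely, for each $j$ we fix the labeling $\mathcal{B}_j=\{v_{j,1},\dots,v_{j,d}\}$ and set
\[
Y=\big(j,\ \text{a noisy bit}\big),
\]
where the bit equals a designated value when $u$ is non-orthogonal to some basis vector we are about to use, and is random otherwise. The cleanest variant is this. The transmitter wants to send a bit $w$, and the receiver's side of the entanglement gives it an outcome $v\in\mathcal{B}_j$. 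The channel is designed so that the output deterministically reveals, for each $v\in\mathcal{B}_j$, whether $u\not\perp v$.

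The intended design has output $Y=(j,z)$ with $z\in\{0,1\}^d$, where $z_k=\mathbb{I}[\langle u,v_{j,k}\rangle\neq 0]$. This is a deterministic channel. The input set is $\mathcal{X}=\bigcup_i\mathcal{A}_i\times\{0,1\}$, and input $(u,w)$ is allowed in state $(i,j)$ only if $u\in\mathcal{A}_i$. The output is $Y=(j,\,z\oplus w\mathbf{1}_{\ldots})$, i.e.\ the bit $w$ is XORed onto the coordinates. Under that design, a classical scheme would reveal $w$ trivially, so we must refine it. Instead we let the output be $(j,\,w\oplus f)$, where $f$ is a ``masking'' random bit unless the transmitter's $u$ is compatible with a hidden key. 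This route gets complicated, so we commit to the following simpler channel. Let the state be $s=(i,j,k)$ with $k\in[d]$ uniform. The transmitter picks $u\in\mathcal{A}_i$ and a bit $w$. The output is $Y=(j,w)$ if $u\perp v_{j,k}$, and otherwise $Y$ is a uniformly random element of $\{(j,0),(j,1)\}$. Receiver-side, the index $k$ is unknown.

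\emph{Entanglement-assisted scheme.} We share the maximally entangled state $\frac1{\sqrt d}\sum_t|t\rangle|t\rangle$. The transmitter measures in $\mathcal{A}_i$ and gets $u$; if the receiver measured in $\mathcal{B}_j^*$ it would get $v$ with $u\not\perp v$ impossible only when\ldots This runs into the issue that $k$ is chosen by the state, not by the receiver. The better choice is to let the state be $(i,j)$ only and set the output to $(j,w)$ when $u\perp v$ for \emph{some} designated relation. We therefore finalize the construction as a ``guess'' channel. The input is $(u,w)$ with $u\in\mathcal{A}_i$, and the output is $(j,\,w\oplus g(u,j))$, where $g(u,j)\in\{0,1\}^{\mathcal{B}_j}$ is the orthogonality pattern: coordinate $v$ is $w$ if $u\perp v$ and an erasure symbol otherwise. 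With a maximally entangled state the receiver obtains $v^*\in\mathcal{B}_j$ with $v^*\not\perp u$, wait—we need $u\perp v^*$. For that the receiver should measure in the conjugate basis and then use the coordinate of any $v\neq v^*$; since $\mathcal{B}_j$ is orthogonal, $v\perp v^*$, but that does not give $u\perp v$ in general.

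\emph{Fallback plan for the key step.} The converse is the main obstacle. The plan is to show that a deterministic classical encoder picks $u_i(w)$ per state, and the B-KS property then forces, for both $w=0$ and $w=1$, the existence of $(i,j)$ with $u_i\perp v_j$. The output distributions therefore overlap, and one-shot zero error fails. For $n$ uses, we induct on the last channel use conditioned on the past, giving $C_0^{\C}=0$. Achievability with entanglement and the upper bound $C_0^{\EA}\le 1$ (a binary information alphabet) give value $1$. The hard part, which I have not yet resolved, is arranging the output so that entanglement makes the relevant orthogonality event certain while it remains unavoidable classically.
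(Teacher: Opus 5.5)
\textbf{There is a genuine gap.} You never settle on a channel for which both directions hold, and each candidate you write down is already beaten classically. The reason is that the transmitter sees the \emph{entire} state. Any orthogonality condition that the state $(i,j)$ or $(i,j,k)$ imposes on $u\in\mathcal{A}_i$ can therefore be met classically by looking at the state.

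For the magic-square set \eqref{eq:BKS}, every vector of every $\mathcal{B}_j$ has an orthogonal partner in every $\mathcal{A}_i$. In your $(i,j,k)$ channel the encoder simply picks $u\perp v_{j,k}$ and sets the bit to $W$, so the output is $(j,W)$ deterministically. In your ``guess'' channel it picks $u$ orthogonal to some vector of $\mathcal{B}_j$, so some coordinate carries $w=W$ unerased. Either way $c_0^{\C}\ge 1$. You diagnosed the $(i,j,k)$ variant as failing on the entanglement side, but it already fails classically.

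Your converse sketch has a related problem: it applies the B-KS property to vectors $v_j\in\mathcal{B}_j$ that no classical encoder ever selects. Only $\mathcal{A}$-vectors are inputs in your channels, so there is no classical assignment for B-KS to contradict. Revealing $j$ to the receiver also works against you. The classical obstruction has to come from confusability \emph{across} states that the receiver cannot tell apart. Finally, ``binary information alphabet'' is not yet an argument for $C_0^{\EA}\le 1$ for whatever channel you end up with.

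\textbf{The missing construction.} The paper's construction, which extends Cubitt et al., makes the vectors of \emph{both} families channel inputs. State $(i,j)$ restricts the input to the $2d$ vectors of $\mathcal{A}_i\cup\mathcal{B}_j$. The output is a random unordered orthogonal pair $\{x,x'\}$ containing the input, and it reveals nothing about the state.

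\emph{Classical converse.} Let $u_{i,j}$ and $v_{i,j}$ be the inputs for messages $1$ and $2$ in a zero-error $(2,1)$ code. Zero error requires $u_{i,j}\neq v_{i',j'}$ and $u_{i,j}$ non-orthogonal to $v_{i',j'}$ for \emph{all} pairs of states. Distinct vectors of one basis are orthogonal, so in any state the two messages cannot use the same basis. Propagating from state $(1,1)$ then forces, up to relabeling, $u_{i,j}\in\mathcal{A}_i$ and $v_{i,j}\in\mathcal{B}_j$. Applying B-KS to $(u_{i,1})_i$ and $(v_{1,j})_j$ yields an orthogonal pair, which is a common output of the two messages. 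Lemma \ref{lem:singleletter_classical} then lifts $c_0^{\C}=0$ to $C_0^{\C}=0$; your induction idea is fine at this step.

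\emph{Entanglement-assisted scheme.} Share a $d$-dimensional maximally entangled state. The transmitter measures in $\mathcal{A}_i$ if $W=1$ and in $\mathcal{B}_j$ if $W=2$, and sends the outcome $x$. The receiver's system collapses to $\ket{\phi_x^*}$. On seeing $\{x,x'\}$, the receiver perfectly distinguishes the orthogonal states $\ket{\phi_x^*}$ and $\ket{\phi_{x'}^*}$, recovers $x$, and reads $W$ off the family $x$ belongs to.

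\emph{Upper bound.} Fix the state to $(1,1)$, which any zero-error scheme must handle. The confusability graph on the $2d$ inputs allowed in that state is covered by two $d$-cliques, so its Lov\'asz number is at most $2$. Beigi's bound then gives $C_0^{\EA}\le 1$.
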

\noindent The proof of Theorem \ref{thm:activation} is presented in Section \ref{proof:activation}. For a brief illustration of how the channel with state is constructed, in particular for the B-KS set specified in \eqref{eq:BKS}, see Fig. \ref{fig:BKS} and its description. We point out that the construction of the channel with state builds upon the construction in \cite[Thm. 2]{cubitt2010improving} of a channel (without state). A key difference is that here we additionally impose state-dependent input constraints.

\begin{figure}[htbp]
\center
\includegraphics[width=0.45\textwidth]{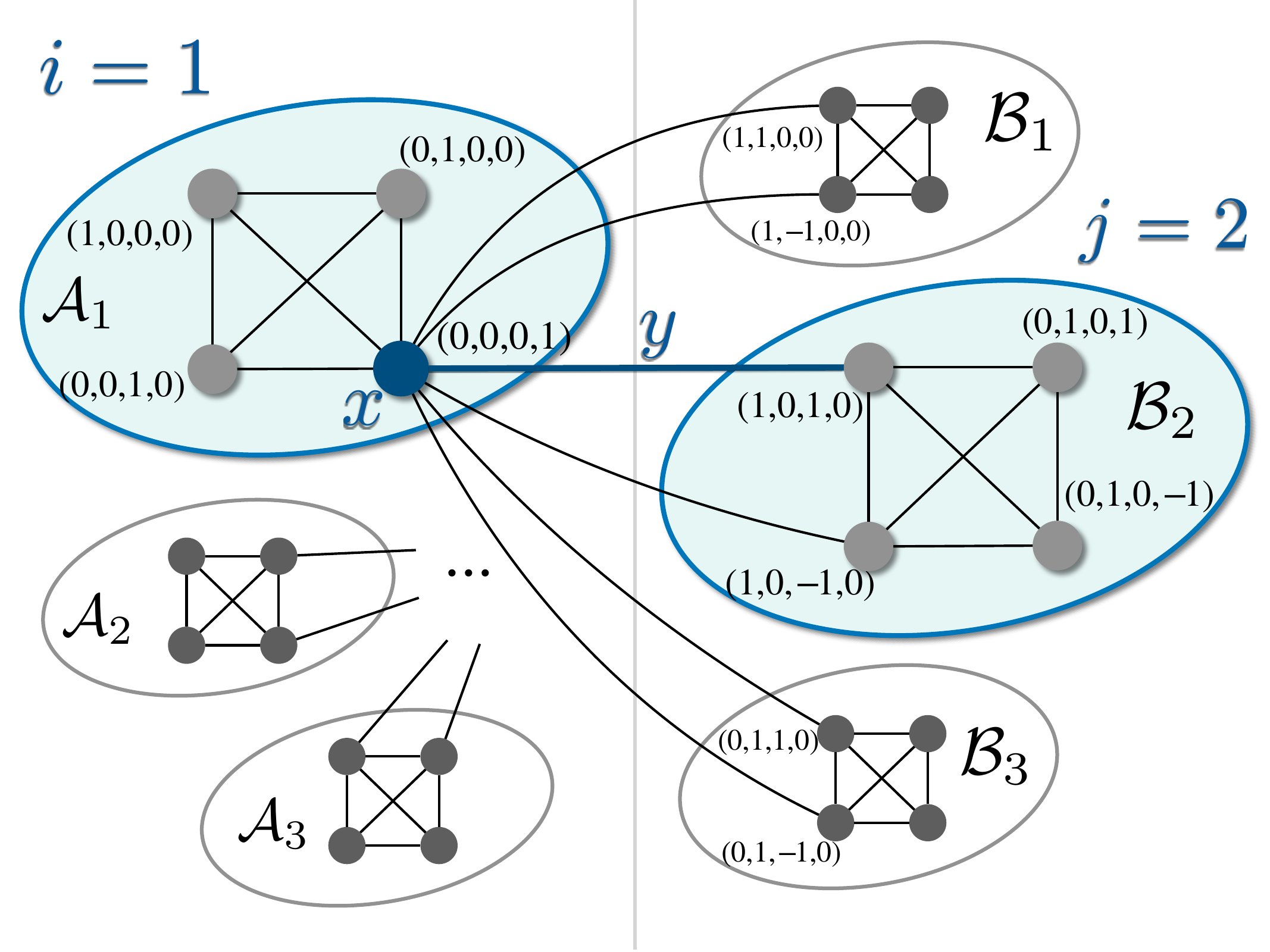}
\caption{A channel with state defined from the B-KS set \eqref{eq:BKS}: For each $\mathcal{A}_i, i\in \{1,2,3\}$ (and similarly $\mathcal{B}_j, j\in \{1,2,3\}$), there are $4$ channel inputs (dark circles in the figure) associated with the $4$ vectors in $\mathcal{A}_i$ ($\mathcal{B}_j$). The channel thus contains $24$ inputs in total.
The channel state $S\in \{(i,j)\colon i\in \{1,2,3\}, j\in \{1,2,3\}\}$, such that when the state is $S=(i,j)$, the transmitter is restricted to picking an input from the subsets of inputs associated with $\mathcal{A}_i \cup \mathcal{B}_j$ (See Remark \ref{rem:input_constraint}). Two inputs are connected by an edge if and only if their associated vectors are orthogonal (not all such edges are drawn in the figure above). The edges correspond to the outputs of the channel. Conditioned on the state $S=(i,j)$, when the transmitter chooses an input $x$ from the allowed subset, the output of the channel is an edge $y$ drawn randomly from the edges incident to $x$. The graph can be also viewed as the confusability graph of the underlying channel without the state-dependent input constraints.}
\label{fig:BKS}
\end{figure}

\section{Proof of Theorem \ref{thm:cyclic_graph}} \label{proof:C5}
Before we delve into the proof, let us first note that for the graph channel with state defined by the graph $\mathcal{C}_5$, it is equivalent to consider the channel with its input-output relationship captured by following algebraic form,
\begin{align} \label{eq:C5_algebraic}
  Y = X + S  \pmod 5
\end{align}
where the channel input $X\in \{0,1\}$, and channel state $S\in \{0,1,2,3,4\}$. It facilitates the proof to consider this algebraic form of the channel. In the following, it will be convenient to write $x \oplus_m y \triangleq x+y \pmod m$.

\subsection{Achievable rate by entanglement assistance} \label{proof:C5Q}
In this subsection, we show that there are entanglement-assisted coding schemes with causal CSIT, which achieve the rate $R= 1-H_b\big(\cos^2(\frac{\pi}{20})\big) \approx 0.8341$. 

Our scheme is based on a channel conversion strategy, which is inspired by quantum non-local games, adopted here to design the coding scheme of a channel with causal CSIT. Intuitively, we seek to convert the given $(A,S,X)\rightarrow (B,Y)$ channel with state $S$ and entanglement-assistance $AB$,  into a classical $X'\rightarrow Y'$ channel without state and without entanglement assistance.
Let us describe this channel conversion strategy in a more general way, since this will be also useful for other proofs in this paper. We assume that the transmitter and the receiver have access to a bunch of Bell pairs, each in the state $\ket{\Phi} = \frac{\sqrt{2}}{2}\big(\ket{00}+\ket{11}\big)$.

\noindent {\bf [Channel conversion]} 
Let $\xi_s \in [0,2\pi)$ and $\eta_y \in [0,2\pi)$ for $s\in \mathcal{S}$ and $y\in \mathcal{Y}$ be $|\mathcal{S}|+|\mathcal{Y}|$ angles. The values of these angles will be specified later.
Let ${\sf R}(\theta) \triangleq \ket{0}\!\bra{0} + e^{i\theta}\!\ket{1}\!\bra{1}$ denote the phase gate for a qubit with phase $\theta\in [0,2\pi)$.
The channel conversion strategy works as follows. 

For each time slot, the transmitter and the receiver consume one Bell pair. Prior to transmission, the transmitter applies the rotation gate ${\sf R}(-\xi_s)$ to its qubit conditioned on the channel state $S=s$, and then performs a local measurement in the $\{\ket{+}, \ket{-}\}$ basis, where $\ket{+}\triangleq \frac{\sqrt{2}}{2}\big(\ket{0}+\ket{1}\big)$ and $\ket{-}\triangleq \frac{\sqrt{2}}{2}\big(\ket{0}-\ket{1}\big)$.
The measurement outcome is mapped to a classical random variable $U\in \{0,1\}$ by assigning the outcome $\ket{+}$ to $0$ and $\ket{-}$ to $1$. 

The transmitter selects a binary symbol $X'\in \{0,1\}$ (which will serve as the input for the converted channel) and performs the classical pre-processing 
\begin{align}
	X=X'\oplus_2 U.\label{eq:Xprime}
\end{align}
The symbol $X$ is then transmitted over the original channel. 

Upon receiving the channel output $Y=y$, the receiver applies the rotation gate ${\sf R}(\eta_y)$ to its qubit, followed by a local measurement in the $\{\ket{+}, \ket{-}\}$ basis. Using the same mapping $\ket{+}$ to $0$ and $\ket{-}$ to $1$, the receiver obtains a classical binary output $Y'\in \{0,1\}$.

\begin{remark}
	A general channel conversion strategy need not be restricted to using the maximally entangled state or the rotation gates. Instead, the transmitter and the receiver may start from an arbitrary shared state and conduct measurements that depend on their respective inputs. The goal of a channel conversion strategy is to convert each use of the channel (with only the current channel state) to a `new' channel without state that potentially has a larger classical capacity. Channel conversion is also implicit in the classical capacity formula with causal CSIT,  $C^{\C}(\mN)  = \max_{\mP_{U}, x(u,s)} I(U;Y)$ where $x(u,s)$ is a memoryless map applied over each channel use, that `converts' the $(X,S)\rightarrow Y$ channel with state, into a $U\rightarrow Y$ channel without state.
\end{remark}

\noindent {\bf [Analysis]}
To analyze the behavior of the converted channel, consider $S=s, Y=y$.
The quantum state after both parties have done their operations is
$\ket{\Phi'}= \frac{\sqrt{2}}{2}\big(\ket{00}+e^{i(\eta_y-\xi_s)}\ket{11}\big)= \frac{\sqrt{2}}{4}\big(1+e^{i(\eta_y-\xi_s)}\big) \big( \ket{++}+\ket{--} \big)+ \frac{\sqrt{2}}{4}\big(1-e^{i(\eta_y-\xi_s)}\big) \big( \ket{+-}+\ket{-+} \big)$.
The probability that the measurement outcomes ($U$ at the transmitter, $Y'$ at the receiver) are identical is given by,
\begin{align}
  &\Pr(Y'=U\mid S=s, Y=y)\notag \\
  &= 2\times |\frac{\sqrt{2}}{4}\big(1+e^{i(\eta_y-\xi_s)}\big)|^2\\
  &= \cos^2\left(\frac{\eta_y-\xi_s}{2}\right), \label{eq:PrYpeqUcSY}
\end{align}
corresponding to the state collapsing to either $\ket{++}$ or $\ket{--}$. It follows that the measurement outcomes do not match with probability 
\begin{align}
	\Pr(Y'\neq U \mid S=s, Y=y)=\sin^2\left(\frac{\eta_y-\xi_s}{2}\right)\label{eq:PrYpneqUcSY}
\end{align}
corresponding to the state collapsing to either $\ket{+-}$ or $\ket{-+}$.

Moreover, since the reduced state of either subsystem of a Bell pair is maximally mixed, and a maximally mixed state after any unitary operation is still maximally mixed, $U$ is uniformly distributed over $\{0,1\}$ regardless of $s\in \mathcal{S}$. It follows that $X$ is independent of $S$. 

Next we wish to determine the value of $\Pr(Y'=X')$ that defines the resulting BSC. In order to do so we will first determine the conditional probabilities $\Pr(Y'=X'\mid X=0)$ and $\Pr(Y'=X'\mid X=1)$. First consider the $X=0$ case, and note that according to \eqref{eq:Xprime}, we have $(X=0)\iff(X'=U)$.
We therefore obtain,
\begin{align}
  &\Pr(Y'=X'\mid X=0) \notag\\
  &=\Pr(Y'=U \mid X=0)  \\
  &=\sum_s\Pr(S=s\mid X=0)\Pr(Y'=U \mid X=0, S=s)\\
  &=\sum_{s} \mP_S(s) \Pr(Y'=U \mid X=0,S=s) \label{eq:PrYpeqUcX0}\\
  &=\sum_{s} \mP_S(s) \Pr(Y'=U \mid S=s, Y=s)\label{eq:XStoSY1}\\
  &=\sum_{s} \mP_S(s) \cos^2\left(\frac{\eta_s-\xi_s}{2}\right) \label{eq:PgivenX0}
\end{align}
Step \eqref{eq:PrYpeqUcX0} used the fact that $X$ is independent of $S$. Step \eqref{eq:XStoSY1} used the fact that according to the channel input-output relationship specified in \eqref{eq:C5_algebraic}, we have $(X=0,S=s)\iff (S=s,Y=s)$. The last step used \eqref{eq:PrYpeqUcSY}.
Similarly,
\begin{align}
  &\Pr(Y'=X'\mid X=1) \notag\\
  &=\Pr(Y'\neq U \mid X=1) \\
  &=\sum_{s} \mP_S(s) \Pr(Y'\neq U \mid X=1, S=s) \label{eq:PrYpneqUcX1}\\
  &=\sum_{s} \mP_S(s) \Pr(Y'\neq U \mid S=s, Y=s\oplus_5 1)\label{eq:XStoSY2}\\
  &=\sum_{s} \mP_S(s) \sin^2\left(\frac{\eta_{s\oplus_5 1}-\xi_s}{2}\right) \label{eq:PgivenX1}
\end{align}

Now, let us specify the angles $\{\xi_s\}, \{\eta_y\}$ as,
\begin{align}
  \begin{bmatrix}
    \xi_0 & \xi_1 & \xi_2 & \xi_3 & \xi_4
  \end{bmatrix}
  =
   \frac{\pi}{10}
   \begin{bmatrix}
    0 & 8 & 16 & 4 & 12
  \end{bmatrix},  \label{eq:angles_y_C5}  \\
  \begin{bmatrix}
    \eta_0 & \eta_1 & \eta_2 & \eta_3 & \eta_4
  \end{bmatrix}
  =
   \frac{\pi}{10}
   \begin{bmatrix}
    1 & 9 & 17 & 5 & 13
  \end{bmatrix}. \label{eq:angles_s_C5}
\end{align}
We illustrate the geometry of these angles in Fig. \ref{fig:pentagon}. In polar coordinates ($r,\theta$), the points $\{(1, \xi_i)\}_{i=0}^4$ form the five vertices of a pentagon centered at the origin. $\{(1, \eta_j)\}_{j=0}^4$ form another pentagon, obtained by rotating the first one counterclockwise by $\pi/10$. Note that $\eta_s-\xi_s = \tfrac{\pi}{10}$, and $\eta_{s\oplus_5 1}-\xi_s =\tfrac{9\pi}{10} \pmod{2\pi}, \forall s \in \{0,1,\cdots, 4\}$.

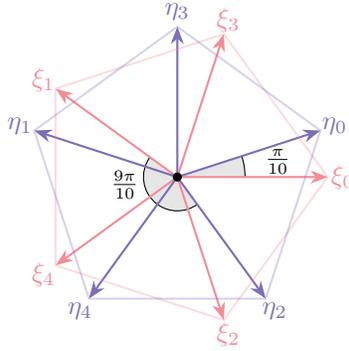
\begin{figure}[h]
\center
\begin{tikzpicture}[scale=2,>=Stealth]

\definecolor{myred}{RGB}{242,141,151}
\definecolor{mypurple}{RGB}{127,110,180}

\def\n{5}
\def\r{1}

\coordinate (A) at (1,0);
\coordinate (B) at (0,0);
\coordinate (C) at (18:\r);

\coordinate (D) at (72*2:\r);
\coordinate (E) at (72*4+18:\r);

\pic[draw,  fill=gray!20, font=\footnotesize, angle radius=9mm, "$\tfrac{\pi}{10}$", angle eccentricity=1.5] {angle = A--B--C};
\pic[draw, fill=gray!20, font=\footnotesize, angle radius=4.5mm] {angle = D--B--E};
\node at (-0.35,-0.05) {\footnotesize $\tfrac{9\pi}{10}$};

\foreach \i/\j  in {0/0,1/3,2/1,3/4,4/2} {
  \draw[->, thick, myred] (0,0) -- (72*\i:\r)
    node[font=\small, pos=1.1] {$\xi_{\j}$};
}

\foreach \i/\j  in {0/0,1/3,2/1,3/4,4/2} {
  \draw[->, thick, mypurple] (0,0) -- (72*\i+18:\r)
      node[font=\small, pos=1.1] {$\eta_{\j}$};
}

\draw[myred,opacity=.3,thick] (0:\r)  
        \foreach \i in {0,...,4} {  
            -- (72*\i:\r) 
        } -- cycle; 
        
\draw[mypurple,opacity=.3,thick] (18:\r)  
        \foreach \i in {0,...,4} {  
            -- (72*\i+18:\r)  
        } -- cycle; 

\fill[black] (0,0) circle (0.03);

\end{tikzpicture}
\caption{Plot of the angles $\{\xi_s\}, \{\eta_y\}$ for $\mathcal{C}_5$.}
\label{fig:pentagon}
\end{figure}

Plugging in the values of the angles $\{\xi_s\}, \{\eta_y\}$ into \eqref{eq:PgivenX0}, \eqref{eq:PgivenX1}, we obtain $\Pr(Y'=X'\mid X=0) = \cos^2(\pi/20)$ and $\Pr(Y'=X'\mid X=1) = \sin^2(9\pi/20)= \cos^2(\pi/20)$. It follows that $\Pr(Y'=X') = \cos^2(\pi/20)$. The above analysis holds for $X'\in\{0,1\}$. Therefore, $X'\to Y'$ is a BSC with the correct transmission probability equal to $\cos^2(\pi/20)$. A classical code over this BSC then achieves the rate $1-H_b(\cos^2(\pi/20)) \approx 0.8341$. \hfill \qed

\subsection{Classical capacity}
In this subsection, we prove that $C^{\C}(\mathcal{C}_5) = 0.8$. 
We apply the formula for the classical capacity with causal CSIT (e.g., \cite[Thm. 7.2]{NIT}).
\begin{align}
  C^{\C}(\mathcal{C}_5) & = \max_{\mP_{U}, x} \big( \underbrace{H(Y) - H(Y\mid U)}_{I(U;Y)} \big) \label{eq:derive_C5_C_1} \\
  &\leq \log_2(5)-\min_{x} \min_u H(Y\mid U=u) \label{eq:derive_C5_C_2}\\
  &\leq \log_2(5) - H\big(\big[\tfrac{2}{5}, \tfrac{2}{5}, \tfrac{1}{5}\big]\big) \label{eq:derive_C5_C_3}\\
  &= 0.8
\end{align}
In Step \eqref{eq:derive_C5_C_1}, the maximization is over all distributions $\mP_U$ of an auxiliary random variable $U$ over the support $\mathcal{U}$, and over all functions $x \colon \mathcal{U}\times \{0,1,2,3,4\} \to \{0,1\}$.
Step \eqref{eq:derive_C5_C_2} is because $H(Y) \leq \log_2 |\mathcal{Y}| = \log_2 5$.
To see Step \eqref{eq:derive_C5_C_3}, fix any function $x$ and $U=u$, and then $Y=x(u,S) \oplus_5 S$. Let $y_{s} \triangleq x(u,s)\oplus_5 s$. Observe that in the vector $[y_0,y_1,y_{2},y_{3},y_{4}]$ an element cannot repeat more than twice since  $x(u,s) \in \{0,1\}$. This restricts the possible $\mP_Y$ (up to permutations) to $[\tfrac{2}{5},\tfrac{2}{5},\tfrac{1}{5},0, 0]$, $[\tfrac{2}{5},\tfrac{1}{5},\tfrac{1}{5},\tfrac{1}{5}, 0]$, or $[\tfrac{1}{5},\tfrac{1}{5},\tfrac{1}{5},\tfrac{1}{5}, \tfrac{1}{5}]$, and thus we have $H(Y\mid U=u) \geq H\big(\big[\tfrac{2}{5}, \tfrac{2}{5}, \tfrac{1}{5}\big]\big)$ for all $x, u$. This proves that $C^{\C}(\mathcal{C}_5)\leq 0.8$.

To prove that $C^{\C}(\mathcal{C}_5)\geq 0.8$, observe that $I(U;Y)=0.8$ is achieved by choosing $U$ to be uniformly distributed over $\{0,1\}$, and $x(u,s) = u\oplus_2 s$ for $u\in \{0,1\}$ and $s\in \{0,1,2,3,4\}$.  \hfill \qed

\section{Proof of Theorem \ref{thm:activation}} \label{proof:activation}
Recall that we are given a B-KS set in $\mathbb{C}^d$ specified by $(\{\mathcal{A}_i\}_{i=1}^a,\{\mathcal{B}_{j}\}_{j=1}^{b})$. To define the channel with state, we view the channel with state as imposing input constraints on a channel $\mN_o$ (without state) as explained in Remark \ref{rem:input_constraint}. 
The input alphabet of $\mN_o$ is defined as $\mathcal{X} = \mathcal{X}^A \cup \mathcal{X}^B$ where $\mathcal{X}^A \triangleq \{(1,i,k) \colon i\in [a], k\in [d]\}$ and $\mathcal{X}^B \triangleq  \{(2,j,k) \colon j\in [b], k\in [d]\}$ such that $|\mathcal{X}| = |\mathcal{X}^A|+|\mathcal{X}^B| = ad+bd= (a+b)d$. For each $i\in [a]$, let $\mathcal{X}^A_i \triangleq \{(1,i,k)\colon k\in [d]\}$, and for each $j\in [b]$, let $\mathcal{X}^B_j \triangleq \{(2,j,k)\colon k\in [d]\}$.

For each $i\in [a]$, we use $\ket{\phi_{(1,i,k)}}$ to denote the (normalized) $k^{th}$ vector ($k\in [d]$) in the basis $\mathcal{A}_i$, and for each $j\in [b]$, we use $\ket{\phi_{(2,j,k)}}$ to denote the (normalized)  $k^{th}$ ($k\in [d]$) vector in the basis $\mathcal{B}_j$.
The output alphabet of $\mN_o$ is defined as $$\mathcal{Y} \triangleq \big\{ \{x,x'\} \colon x\in \mathcal{X}, x'\in \mathcal{X}, \mbox{~s.t.~} \ket{\phi_x} \mbox{~and~} \ket{\phi_{x'}} \mbox{~are~orthogonal} \big\}.$$ The channel $\mN_o(y\mid x) > 0$ if and only if $x\in y$. The set of allowed inputs depends on a random state $S \in \mathcal{S} \triangleq \{(i,j) \colon i\in [a], j\in [b]\}$, such that when the channel state $S=(i,j)$, the allowed channel inputs are exactly the $2d$ elements in $\mathcal{X}^A_i \cup \mathcal{X}^B_j$.  Although we have not specified the exact values of $\mN_o(y\mid x)$ or $\mP_S(s)$, any implementation that satisfies the constraints will suffice for our purpose.

 \subsection{Proof of $c_0^{\C} = C_0^{\C} = 0$}
As a first step, we show that $M_{1,\opt}^{\C} < 2$ so $M_{1,\opt}^{\C} = 1$ and thus $c_0^{\C} = 0$. We prove it by contradiction. Suppose $M_{1,\opt}^{\C} \geq 2$, then there exists a deterministic (Remark \ref{rem:det_coding}) classical $(M=2,n=1)$ coding scheme with zero error.   For one use of the channel, conditioned on $S=(i,j)$, let us denote the output of this encoder as $u_{i,j} \in \mathcal{X}^A_i \cup \mathcal{X}^B_{j}$ if the message $W=1$, and $v_{i,j} \in \mathcal{X}^A_i \cup \mathcal{X}^B_{j}$ if the message $W=2$. Let $\mathcal{U} = \{u_{i,j}\colon (i,j)\in \mathcal{S}\}$ and $\mathcal{V} = \{v_{i,j}\colon (i,j)\in \mathcal{S}\}$.
Clearly, for $u\in \mathcal{U}$ and $v\in \mathcal{V}$, we must have $u \not= v$, as different messages have to correspond to different inputs in order for a zero-error decoding. 
Besides, $\ket{\phi_u}$ and $\ket{\phi_v}$ cannot be orthogonal, because otherwise $\mN_o(\{u,v\}\mid u) \mN_o(\{u,v\}\mid v) > 0$, which means both $W=1$ and $W=2$ could lead to $y=\{u,v\}$,  thus making zero-error decoding impossible.
This implies that, for each $i\in [a]$ and $j\in [b]$, $u_{i,j}$ and $v_{i,j}$ cannot  both belong to $\mathcal{X}^A_i$ or both belong to $\mathcal{X}^B_{j}$.

Without loss of generality, suppose $u_{1,1} \in \mathcal{X}^A_1$ and $v_{1,1} \in \mathcal{X}^B_1$. 
It follows that for each $j =\{2,3,\cdots, b\}$, $v_{1,j} \not\in \mathcal{X}^A_1$ so $ v_{1,j} \in \mathcal{X}^B_{j}$, and thus $u_{1,j} \in \mathcal{X}^A_1$. Due to similar reasons, for each $i \in \{2,3,\cdots, a\}$ and $j \in [b]$, we have $u_{i,j} \in \mathcal{X}^A_i$ and $v_{i,j}\in \mathcal{X}^B_{j}$.
Therefore, $\ket{\phi_{u_{1,1}}}\in \mathcal{A}_1, \ket{\phi_{u_{2,1}}} \in \mathcal{A}_2, \cdots, \ket{\phi_{u_{a,1}}} \in \mathcal{A}_a$, $\ket{\phi_{v_{1,1}}}\in \mathcal{B}_1, \ket{\phi_{v_{1,2}}} \in \mathcal{B}_2, \cdots, \ket{\phi_{v_{1,b}}} \in \mathcal{B}_b$. Due to the definition of a B-KS set, there exist $i \in [a], j\in[b]$ such that $\ket{\phi_{u_{i,1}}}$ and $\ket{\phi_{v_{1,j}}}$ are orthogonal, but since $u_{i,1}\in \mathcal{U}$ and $v_{1,j} \in \mathcal{V}$, those two vectors cannot be orthogonal. This contradiction proves that there is no classical $(M=2,n=1)$ coding scheme with zero error, so $M^{\C}_{1,\opt} = 1$.

The proof is now completed by invoking the following lemma, which works for a general channel with state.
\begin{lemma} \label{lem:singleletter_classical}
  Given any channel with state, if $M^{\C}_{1,\opt} = 1$, then $M^{\C}_{n,\opt} = 1$ for all $n\in \mathbb{N}$. Equivalently, $c_0^{\C}=0 \Leftrightarrow C_0^{\C} = 0$. 
\end{lemma}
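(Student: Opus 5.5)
We prove the contrapositive: if some zero-error classical $(M,n)$ scheme with $M\ge 2$ exists, then a zero-error classical $(2,1)$ scheme exists. By Remark \ref{rem:det_coding} we may take the $n$-letter scheme to be deterministic, with encoders $\phi^{(i)}(w,s^i)$ and decoder $\psi(y^n)$. Restricting to two messages $w_1\neq w_2$ still gives zero error, so assume $M=2$. Since $\mP_S(s)>0$ for all $s$, zero error means the following. For every pair of state sequences $s^n,t^n$ (the state realizations under the two messages may differ, because the decoder does not know them), the two output distributions $\mN^{\otimes n}(\cdot\mid \phi(w_1,s^n),s^n)$ and $\mN^{\otimes n}(\cdot\mid \phi(w_2,t^n),t^n)$ have disjoint supports. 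Here $\phi(w,s^n)$ denotes the causal codeword.

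The key step is an induction on $n$ that peels off the first channel use. Because the channel is memoryless and the supports are product sets, $\mathrm{supp}\,\mN^{\otimes n}(\cdot\mid x^n,s^n)=\prod_i \mathrm{supp}\,\mN(\cdot\mid x_i,s_i)$. Two product sets are disjoint iff some coordinate is disjoint. Suppose the first letter is not distinguishing: for every $s_1,t_1$ the supports $\mathrm{supp}\,\mN(\cdot\mid\phi^{(1)}(w_1,s_1),s_1)$ and $\mathrm{supp}\,\mN(\cdot\mid\phi^{(1)}(w_2,t_1),t_1)$ intersect. (If for every choice of $(s_1,t_1)$ they were disjoint, we would already have a zero-error one-shot scheme sending $w\mapsto\phi^{(1)}(w,s_1)$, contradicting $M^{\C}_{1,\opt}=1$.)

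In that case, fix any $s_1,t_1$. For all continuations $s_2^n,t_2^n$, the product supports must be disjoint in some coordinate $i\ge 2$. Fixing $s_1$ and $t_1$ then defines an $(n-1)$-letter causal deterministic scheme, $w_1\mapsto \phi^{(i)}(w_1,s_1,s_2^i)$ and $w_2\mapsto\phi^{(i)}(w_2,t_1,t_2^i)$. This is a legitimate scheme: since the encoder may use different constant prefixes for different messages, it is still a function of $(w,s_2^i)$. By the characterization above it is zero-error on $n-1$ uses, and induction reduces to $n=1$, giving a contradiction.

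The main obstacle is the quantifier in the first-letter step. "Not distinguishing" only says that for some pair $(s_1,t_1)$ the supports intersect, not for all pairs. So we must argue carefully:
\begin{itemize}
\item If some pair $(s_1^*,t_1^*)$ has intersecting first-letter supports, fix that pair and recurse as above; the recursion is valid for that fixed pair.
\item If every pair is disjoint, we get the one-shot contradiction directly.
\end{itemize}
So the dichotomy is exactly "some pair intersects" versus "all pairs disjoint", and both branches close. The only check needed is that the characterization of zero error as disjointness over all cross pairs of states uses $\mP_S>0$ and that the encoder's dependence is causal, both of which hold.
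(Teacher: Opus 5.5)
Your argument is correct and is essentially the paper's induction run in mirror image. The paper uses the inductive hypothesis on the first $k$ uses to get a confusable prefix $(\tilde s^k,\tilde\sigma^k,\tilde y^k)$, then applies the one-shot base case to the last use with those two different prefixes frozen. You instead apply the base case to the first use and, after freezing a confusable pair $(s_1^*,t_1^*)$, apply the induction to the remaining $n-1$ uses. Both directions rest on the same observation, which you state explicitly: the two messages may be given different fixed state prefixes without breaking causality.

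Your first statement of the dichotomy (``for every $s_1,t_1$ the supports intersect'', then ``fix any $s_1,t_1$'') has the wrong quantifier. Your later ``some pair intersects versus all pairs disjoint'' version is the correct one, and it closes the argument.
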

A formal proof of Lemma \ref{lem:singleletter_classical} is provided in Appendix \ref{proof:singleletter_classical}. Intuitively, the reasoning behind Lemma \ref{lem:singleletter_classical} can be understood as follows. Consider a one-bit message, i.e., $M=2$. Assume $M^{\C}_{1,\opt} = 1$, which means one channel-use is not enough to perfectly resolve the receiver's confusion between $W=1$ and $W=2$. Now suppose $k$ channel-uses are also not enough for this task, i.e., $M^{\C}_{k,\opt} = 1$, then since after $k$ channel uses, there is still a possible scenario such that the receiver remains confused between $W=1$ and $W=2$, the remaining task of eliminating that confusion over the next, i.e., $(k+1)^{th}$ channel-use is as hard as the original task of eliminating that confusion over the first channel-use, which is already assumed infeasible. Therefore $M^{\C}_{k+1,\opt} = 1$. Inductive reasoning then implies $M^{\C}_{n,\opt} = 1$ for all $n\in \mathbb{N}$. 
\hfil \qed

\subsection{Proof of $c_0^{\EA} = C_0^{\EA} = 1$}
Firstly, we show how to construct an $(M=2, n=1)$ coding scheme with zero error, which then implies $c_0^{\EA} \geq 1$ and thus $C_0^{\EA} \geq 1$.  Note that the scheme uses the channel only once, and sends a message $W\in \{1,2\}$. The idea builds upon \cite[Thm. 2]{cubitt2010improving}, developed further here for the channel with state setting.

The transmitter and the receiver use a maximally entangled state of dimension $d$, i.e., $\ket{\Phi} = \frac{1}{\sqrt{d}}\sum_{i=0}^{d-1}\ket{i}\ket{i}$. 
Note that conditioned on each channel state $S=(i,j)$, the transmitter should pick an input from either $\mathcal{X}^A_i$ or $\mathcal{X}^B_{j}$. The scheme is such that, if $W=1$, the transmitter measures its quantum system on the basis $\mathcal{A}_i$, and the state collapses to $\ket{\phi_x}\in \mathcal{A}_i$ for some $x \in \mathcal{X}^A_i$;  if $W=2$, the transmitter measures its quantum system on the basis $\mathcal{B}_{j}$, and the state collapses to $\ket{\phi_x}\in \mathcal{B}_j$ for some $x \in \mathcal{X}^B_j$. The input $X=x$ is then sent to the channel.
According to the ``ricochet" property \cite[Ex. 3.7.12]{Wilde_2017} of the maximally entangled state $\ket{\Phi}$, we have $\ket{\Phi} = (U U^\dagger \otimes I)\ket{\Phi} = (U\otimes U^*)\ket{\Phi}$ for any unitary matrix $U\in \mathbb{C}^{d\times d}$. Let $U$ be the unitary matrix whose columns are the basis vectors with respect to which the transmitter performs the measurement. It follows that the receiver's state collapses to $\ket{\phi_x^*}$, i.e., the complex conjugate of $\ket{\phi_x}$.

At the receiver, suppose the channel outputs $y = \{x, x'\}$. Then according to the definition of the channel, $\ket{\phi_x}$ and $\ket{\phi_{x'}}$ are orthogonal, and so are $\ket{\phi_x^*}$ and $\ket{\phi_{x'}^*}$. The receiver conducts a projection-valued measure (PVM) $\{\ket{\phi_x^*}\!\!\bra{\phi_x^*}, \ket{\phi_{x'}^*}\!\!\bra{\phi_{x'}^*}, I-\ket{\phi_x^*}\!\!\bra{\phi_x^*}- \ket{\phi_{x'}^*}\!\!\bra{\phi_{x'}^*}\}$ on its quantum system, which allows it to distinguish between $x$ and $x'$ with certainty. Therefore, the receiver can always identify the correct input $x$. Note that the input $x$ used for $W=1$ is some $x\in  \mathcal{X}^A$, whereas the input $x$ used for $W=2$ is some $x\in \mathcal{X}^B$, and since $\mathcal{X}^A\cap \mathcal{X}^B=\emptyset$, the message can be decoded from $x$ with certainty.

Next, we prove that $C^{\EA}_0  \leq 1$ and thus $c^{\EA}_0\leq 1$. For this purpose, we invoke a result of \cite{beigi2010entanglement}, which states that for any classical channel $\mN(y\mid x)$ (without state), the zero-error capacity is upper bounded by the logarithm of the Lov{\'a}sz $\vartheta$ function of the channel's confusability graph. To apply the result, note that the zero-error capacity of our channel with state is upper bounded by that of the channel for the worst case of the state, and thus any case, e.g., $S=(1,1)$. When the state is $(1,1)$, the input alphabet of the channel is $\mathcal{X}^A_1\cup \mathcal{X}^B_1$. To find the value of the Lov{\'a}sz $\vartheta$ function of the confusability graph corresponding to the channel state being fixed to $S=(1,1)$, let us construct the graph, by having its $2d$ vertices in $\mathcal{X}^A_1\cup \mathcal{X}^B_1$, such that any two distinct vertices $x, x' \in \mathcal{X}^A_1\cup \mathcal{X}^B_1$ are connected by an edge if and only if $\ket{\phi_x}$ and $\ket{\phi_{x'}}$ are orthogonal. Since $\ket{\phi_x}$ and $\ket{\phi_{x'}}$ are orthogonal for $x\not=x'$ as long as $x$ and $x'$ both come from $\mathcal{X}^A_1$ or both come from $\mathcal{X}^B_1$, the vertices in $\mathcal{X}^A_1$ form a clique (complete subgraph) of size $d$ and vertices in $\mathcal{X}^B_1$ form another clique of size $d$. Therefore, the confusability graph can be partitioned into two cliques. Since the value of the Lov{\'a}sz $\vartheta$ function of such a graph is at most $2$ (e.g., \cite{riddle2003sandwich, lovasz2019graphs}), it follows that the entanglement-assisted zero-error capacity is upper bounded by $\log_2 (2) = 1$. \hfil \qed

\section{Conclusion}
In contrast to the classical point-to-point channel (without state) where prior work has shown that entanglement-assistance cannot improve the capacity or activate the zero-error capacity, we show that for point-to-point channels with causal CSIT, entanglement-assistance can in some cases improve the capacity, and in some cases activate the zero-error capacity. Graph channels with state emerge as an interesting class of channels, not only because they demonstrate the aforementioned quantum advantage in capacity, but also because these channels are deterministic (output is determined by input and state) which might make them more tractable for further studies, for example to determine their precise entanglement-assisted capacity. The largest  gain in capacity we found in this work is $\tfrac{12}{\pi^2}\approx 21.6\%$. It remains an important open question to determine whether much larger gains are possible, and ultimately to determine the largest multiplicative capacity gain possible from entanglement assistance across all point-to-point channels with causal CSIT. Our work focused on causal CSIT. In contrast, for non-causal CSIT, we do not know whether entanglement assistance can improve capacity. This is another promising direction for future work.

\appendix
\section{Proof of Theorem \ref{thm:complete_graph}} \label{proof:complete_graph}
In this section we study the graph channel with state specified by $(\mathcal{K}_m, \unif)$. Without loss of generality, let the channel output alphabet be $\mathcal{Y} = \{0,1,2,\cdots, m-1\}$, and the channel state alphabet be $\mathcal{S} = \{(s_0,s_1)\colon s_0, s_1 \in \mathcal{Y}, s_0 < s_1\}$. 
As an example, for $m=4$, we have $\mathcal{Y}=\{0,1,2,3\}$ and $\mathcal{S} = \{(0,1),(0,2),(0,3),(1,2),(1,3),(2,3)\}$.
\subsection{Achievable rate by entanglement assistance}
In this subsection we prove that there are entanglement-assisted coding schemes with causal CSIT, which achieve the rate $R^{\EA}(\mathcal{K}_m,p)$ in \eqref{eq:RQ_Km}.
We adopt the same channel conversion strategy used in Section \ref{proof:C5Q}. We refer the readers to the  description of the channel conversion strategy and its analysis in \eqref{eq:PrYpeqUcSY}--\eqref{eq:PrYpneqUcX1}.

To apply the channel conversion strategy, let us specify the angles $\xi_s$ for $s\in \mathcal{S}$ and $\eta_y$ for $y\in \mathcal{Y}$.
In the following, for $x \in \mathbb{R}$, let $(x)_{2\pi} \triangleq x \pmod{2\pi} \in [0,2\pi)$. For $a\neq b\in [0,2\pi)$, define the function
\begin{align} \label{eq:def_arrow_function}
  \uparrow_b^a ~\triangleq \begin{cases}
    \big(\tfrac{a+b+\pi}{2}\big)_{2\pi} & \mbox{if}~ a> b \\
    \big(\tfrac{a+b-\pi}{2}\big)_{2\pi} & \mbox{if}~ a< b
  \end{cases},
\end{align}
which maps $[0,2\pi)\times [0,2\pi) \to [0,2\pi)$. To see the geometric meaning of $\uparrow_b^a$, consider $a,b$ as two (different) unit arrows with directions specified by the angles $a$ and $b$, respectively. Then $\uparrow_b^a$ defines the angle perpendicular to the angular midpoint of  $a$ and $b$, with the orientation chosen to be closer to $a$. It follows that $(\uparrow_{b}^a + \uparrow_{a}^b)=a+b$, and $(\uparrow_{a}^b - \uparrow_{b}^a) \in \{\pi,-\pi\}$.

For $y\in \{0,1,\cdots, m-1\}$, let
\begin{align}
  \eta_y = \tfrac{2\pi}{m} y,
\end{align}
and for $s=(s_0,s_1) \in \mathcal{S}$, let
\begin{align}
  \xi_s = ~\uparrow_{\eta_{s_1}}^{\eta_{s_0}}.
\end{align}
Geometrically, $\{\eta_y\}$ are the $m$ angles that uniformly partition the interval $[0,2\pi)$. For each $s=(s_0,s_1)\in \mathcal{S}$, the angle $\xi_s$ is defined as the direction perpendicular to the angular midpoint of  $\eta_{s_0}$ and $\eta_{s_1}$, with the orientation chosen to be closer to $\eta_{s_0}$. See Fig. \ref{fig:K4_angles} for an illustration of the case with $m=4$.

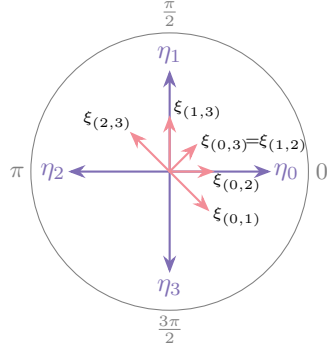
\begin{figure}[h]
\center
\begin{tikzpicture}[scale=1.5, line cap=round, line join=round, >=Stealth]
\definecolor{myred}{RGB}{242,141,151}
\definecolor{mypurple}{RGB}{127,110,180}
\begin{scope}
  \draw[gray] (0,0) circle (1.23);
  \node[gray, font=\footnotesize] at (1.35,0) {$0$};
  \node[gray, font=\footnotesize] at (0,1.4) {$\tfrac{\pi}{2}$};
  \node[gray, font=\footnotesize] at (-1.35,0) {$\pi$};
  \node[gray, font=\footnotesize] at (0,-1.4) {$\tfrac{3\pi}{2}$};

  \foreach \ang\j\l in {0/0/0.9,90/1/0.9,180/2/0.9,270/3/0.9}{
    \draw[->, thick, mypurple] (0,0) -- (\ang:\l)
    node[font=\small, pos=1.15] {$\eta_{\j}$};
  }

  \foreach \ang\l in {315/0.5,0/0.4,45/0.35,90/0.5,135/0.5}{
    \draw[->, thick, myred] (0,0) -- (\ang:\l);
  }
  
  \node[font=\tiny, rotate=0] at (0.6,-0.4) {$\xi_{(0,1)}$};
  \node[font=\tiny, rotate=0] at (0.6,-0.1) {$\xi_{(0,2)}$};
  \node[font=\tiny, rotate=0] at (0.75,0.25) {$\xi_{(0,3)}\!\!\!= \!\!  \xi_{(1,2)}$};
  \node[font=\tiny, rotate=0] at (0.25,0.55) {$\xi_{(1,3)}$};
  \node[font=\tiny, rotate=0] at (-0.55,0.45) {$\xi_{(2,3)}$};
\end{scope}
 
\end{tikzpicture}
\caption{Plot of the angles $\{\xi_s\}, \{\eta_y\}$ for $K_4$.}
\label{fig:K4_angles}
\end{figure}

Let us first consider the special case with $p=1$. We have that for $s=(s_0,s_1) \in \mathcal{S}$, $(X=0,S=s) \Longleftrightarrow (S=s,Y=s_0)$.
It then follows that,
\begin{align}
  &\Pr(Y'=X'\mid X=0)\notag \\
  &=\sum_{s} \mP_S(s)   \Pr(Y'=U\mid S=s, Y=s_0) \label{eq:use_prob1}  \\
  &=\frac{1}{\binom{m}{2}} \sum_{s=(s_0,s_1)\in \mathcal{S}} \cos^2\left( \frac{\eta_{s_0}-\xi_s}{2} \right) \label{eq:use_cos} \\
  &=\tfrac{2}{m(m-1)} \sum_{s=(s_0,s_1)\in \mathcal{S}} \cos^2\left( \frac{\eta_{s_0}-~\uparrow_{\eta_{s_1}}^{\eta_{s_0}}}{2} \right) \\
  &=\tfrac{2}{m(m-1)} \sum_{s=(s_0,s_1)\in \mathcal{S}} \cos^2\left( \frac{\eta_{s_0}- (\tfrac{\eta_{s_0}+\eta_{s_1}- \pi}{2})_{2\pi}}{2} \right) \label{eq:interm1}  \\
  &=\tfrac{2}{m(m-1)} \sum_{s=(s_0,s_1)\in \mathcal{S}} \cos^2\left( \frac{\eta_{s_0}- (\tfrac{\eta_{s_0}+\eta_{s_1}- \pi}{2})}{2} \right) \label{eq:period}  \\
  &= \tfrac{2}{m(m-1)} \sum_{s=(s_0,s_1)\in \mathcal{S}} \cos^2\left( \frac{\left(m - 2(s_1-s_0)\right)\pi}{4m} \right)  \\
  &=\tfrac{2}{m(m-1)} \sum_{\ell=1}^{m-1} (m-\ell)  \cos^2\left( \frac{(m-2\ell)\pi}{4m}\right)\label{eq:subd}\\
  &=\tfrac{1}{m(m-1)} \sum_{k=1}^{m-1} k  \left(1+\cos \left( \frac{(2k-m)\pi}{2m}\right)\right)  \label{eq:subk} \\
  &=\frac{1}{2} + \tfrac{1}{m(m-1)}  \sum_{k=1}^{m-1}  k \sin\left(\frac{k\pi}{m}\right)   \\
  &=\underbrace{\frac{1}{2} + \frac{\cot(\tfrac{\pi}{2m})}{2(m-1)}}_{\triangleq r_m}    \label{eq:interm2}
\end{align}
Step \eqref{eq:use_prob1} uses \eqref{eq:PrYpeqUcX0}, and Step \eqref{eq:use_cos} uses \eqref{eq:PrYpeqUcSY}.
Step \eqref{eq:interm1} is due to $s_0<s_1$, so $\eta_{s_0}<\eta_{s_1}$, and the definition \eqref{eq:def_arrow_function}. Step \eqref{eq:period} is because the function $\cos^2(x)$ has period $\pi$. Step \eqref{eq:subd} follows by substituting $\ell=s_1-s_0$ which, for $(s_0,s_1)\in\mathcal{S}$, can only takes values in $\{1,2,\cdots,m-1\}$, and noting that there are $m-\ell$ pairs $(s_0,s_1)\in\mathcal{S}$ such that $\ell=s_1-s_0$. Step \eqref{eq:subk} makes the substitution $k=m-\ell$ and uses the property $2\cos^2(x)=1+\cos(2x)$.
To see Step \eqref{eq:interm2}, note that $\sum_{k=1}^{m-1}k \sin(\frac{k\pi}{m})$ is the imaginary part of the sum $\sum_{k=1}^{m-1}  k e^{i\frac{k\pi}{m}}=\frac{1}{(1-e^{i\frac{k\pi}{m}})^2} \big( (2-m) e^{i\frac{k\pi}{m}}+m\big) = \frac{(2-m)  +me^{-i\frac{k\pi}{m}}}{-4\sin^2(\frac{k\pi}{2m})}$. The imaginary part is equal to $\frac{m\sin(\frac{k\pi}{m})}{4\sin^2(\frac{k\pi}{2m})}=\frac{m}{2}\cot(\frac{k\pi}{4m})$.

Similarly, for $s=(s_0,s_1) \in \mathcal{S}$, $(X=1,S=s) \Longleftrightarrow (S=s,Y=s_1)$, and it follows that,
\begin{align}
  &\Pr(Y'=X'\mid X=1)\notag \\
  &=\sum_{s} \mP_S(s) \Pr(Y' \not= U\mid S=s, Y=s_1) \label{eq:use_prob2} \\
  &=\frac{1}{\binom{m}{2}} \sum_{s=(s_0,s_1)\in \mathcal{S}} \sin^2\big( \frac{\eta_{s_1}-\xi_s}{2} \big) \label{eq:use_sin} \\
  &=\tfrac{2}{m(m-1)} \sum_{s=(s_0,s_1)\in \mathcal{S}} \sin^2\big( \frac{\eta_{s_1}-~\uparrow_{\eta_{s_1}}^{\eta_{s_0}}}{2} \big) \\
  &=\tfrac{2}{m(m-1)} \sum_{s=(s_0,s_1)\in \mathcal{S}} \cos^2\big( \frac{\eta_{s_0}-~\uparrow_{\eta_{s_1}}^{\eta_{s_0}}}{2} \big) \label{eq:sin_to_cos} \\
  &=r_m ~~ \because \eqref{eq:interm1}\mbox{--}\eqref{eq:interm2}
\end{align}
Step \eqref{eq:use_prob2} uses \eqref{eq:PrYpneqUcX1}, and Step \eqref{eq:use_sin} uses \eqref{eq:PrYpneqUcSY}.
To see Step \eqref{eq:sin_to_cos}, let $a = \eta_{s_0}$ and $b= \eta_{s_1}$. Since $b-{\uparrow_{b}^a} = -a+{\uparrow_a^b} = -a+{\uparrow_b^a} \pm \pi$, we have $\sin^2(\tfrac{b-\uparrow_b^a}{2}) = \sin^2(\tfrac{-a+\uparrow_b^a \pm \pi}{2}) = \cos^2(\tfrac{-a+\uparrow_b^a}{2}) = \cos^2(\tfrac{a-\uparrow_b^a}{2})$.

It follows that $\Pr(Y'=X') = r_m$. The above analysis holds for $X'\in\{0,1\}$. Therefore, $X'\to Y'$ is a BSC with the correct transmission probability equal to $r_m$. 

Given $p<1$, suppose we use the same channel conversion strategy. Then the converted channel has probability $p$ of being the same as the above. With probability $(1-p)$, the converted channel becomes the completely noisy BSC with $\Pr(Y'=X') = 0.5$. Taking the average, $X'\to Y'$ is a BSC with correct transmission probability equal to $pr_m+\tfrac{1-p}{2}=q_{m,p}$. A classical code over this BSC then achieves the rate $1-H_b(q_{m,p})$. \hfill \qed

\subsection{Classical capacity}
In this subsection, we derive $C^{\C}(\mathcal{K}_m,p)$ as in \eqref{eq:CC_Km}. Applying the formula for the classical capacity with causal CSIT, we obtain
\begin{align}
  C^{\C}(\mathcal{K}_m,p) &= \max_{\mP_{U}, x} \big( H(Y) - H(Y\mid U) \big) \label{eq:derive_Km_C_1} \\
  &\leq \log_2 m -  \min_{u} \min_{x}  H(Y\mid U=u)  \label{eq:to_minimize}
\end{align}
In Step \eqref{eq:derive_Km_C_1}, the maximization is over all distribution $\mP_U$ of an auxiliary random variable $U$ over the support $\mathcal{U}$, and over all functions $x \colon \mathcal{U}\times \mathcal{S} \to \{0,1\}$.
Note that in Step \eqref{eq:to_minimize}, for a fixed $u$, the minimization $\min_{x}  H(Y\mid U=u)$ is over all mappings $x(u,s) = \phi(s)$ where $\phi \colon \mathcal{S} \to \{0,1\}$. The conditional entropy $H(Y\mid U=u)$ is found with respective to $\phi$ as,
\begin{align} 
&H(Y\mid U=u) \notag\\
&=H\big( p\big[v^{\phi}_0,v^{\phi}_1,\cdots, v^{\phi}_{m-1}\big] + (1-p)\big[\tfrac{1}{m},\tfrac{1}{m},\cdots, \tfrac{1}{m}\big] \big)\\
&\triangleq h_{\phi} \label{eq:def_h_phi}
\end{align}
where
\begin{equation*}
    v^{\phi}_y = \tfrac{2}{m(m-1)} \sum_{s=(s_0,s_1)\in \mathcal{S}} \mathbb{I}[s_{\phi(s)}=y],~\forall y\in \mathcal{Y}.
   \end{equation*}
To see this, recall that the channel is a noisy version of the channel defined by $(\mathcal{K}_m,\unif)$ and $p\in [0,1]$. Also note that $v_y^{\phi}$ is the probability of $Y=y$ for the noiseless version (i.e., $p=1$). Therefore, the probability of $Y = y$ in the noisy version is equal to $pv_y^{\phi}+(1-p)\tfrac{1}{m}$.

We then prove the follow lemma.
\begin{lemma} \label{lem:minimization}
With the definition of $h_{\phi}$ in \eqref{eq:def_h_phi},
   \begin{align}
    &\min_{\phi} h_{\phi}  = \log_2 m - \frac{1}{m}\sum_{k=1}^m \big( 1+  pc_{m,k}  \big)\log_2 \big( 1+  pc_{m,k}  \big)\label{eq:minimum}
   \end{align}
   where $c_{m,k} \triangleq  \tfrac{m-2k+1}{m-1}$ and the minimization is over all functions $\phi \colon \mathcal{S} \to \{0,1\}$. 
\end{lemma}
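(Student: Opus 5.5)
The plan is to recast the minimization over $\phi$ as a majorization problem over score sequences of tournaments on $m$ vertices.

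\textbf{Reformulation.} A function $\phi\colon\mathcal{S}\to\{0,1\}$ chooses, for each edge $s=(s_0,s_1)$ of $\mathcal{K}_m$, the endpoint $s_{\phi(s)}$. Equivalently, it orients every edge of $\mathcal{K}_m$ toward the chosen endpoint, so $\phi$ corresponds to a tournament on $\mathcal{Y}$. Let $d_y\triangleq\sum_{s}\mathbb{I}[s_{\phi(s)}=y]$ be the number of edges that select $y$, i.e., the in-degree (score) of $y$. Then $v^\phi_y = \tfrac{2d_y}{m(m-1)}$ and $\sum_y d_y=\binom{m}{2}$. The distribution whose entropy is $h_\phi$ is $\mathbf{w}^\phi$ with entries $w_y = p v^\phi_y + \tfrac{1-p}{m}$. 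This is the image of the score vector $\mathbf{d}$ under a fixed affine map with nonnegative slope. Such a map preserves majorization between vectors with equal sums, so it suffices to find a score vector that majorizes every other score vector.

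\textbf{Majorization bound.} The candidate extremal tournament is the transitive one, with scores $(m-1,m-2,\dots,1,0)$. For example, take $\phi(s)\equiv 0$, so that each vertex $y$ is selected by exactly the $m-1-y$ edges $(y,s_1)$ with $s_1>y$.

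To show it majorizes every score vector, sort any score vector decreasingly as $d_{(1)}\ge\cdots\ge d_{(m)}$ and bound the top-$k$ partial sums. Every edge among the bottom $m-k$ vertices contributes its point to one of those vertices. Hence the bottom $m-k$ scores total at least $\binom{m-k}{2}$, which gives
\begin{align*}
\sum_{i=1}^k d_{(i)} \le \binom{m}{2}-\binom{m-k}{2} = \sum_{i=1}^k (m-i).
\end{align*}
This is one direction of Landau's condition and is the only nontrivial step. Since Shannon entropy is Schur-concave, $h_\phi$ is minimized by the transitive tournament.

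\textbf{Evaluation.} Index the transitive scores as $d=m-k$ for $k=1,\dots,m$. Then
\begin{align*}
w_k=\tfrac{1}{m}\Big(1+p\big(\tfrac{2(m-k)}{m-1}-1\big)\Big)=\tfrac{1}{m}(1+pc_{m,k}).
\end{align*}
Because $\sum_k c_{m,k}=0$, we get
\begin{align*}
H(\mathbf{w}) = -\sum_k \tfrac{1+pc_{m,k}}{m}\log_2\tfrac{1+pc_{m,k}}{m}=\log_2 m-\tfrac1m\sum_k(1+pc_{m,k})\log_2(1+pc_{m,k}),
\end{align*}
which is \eqref{eq:minimum}.

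\textbf{Main obstacle.} The main difficulty is recognizing the tournament structure and the majorization, i.e., the partial-sum bound above. The final evaluation is routine.
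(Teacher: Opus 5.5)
Your proof is correct, but it proves the key majorization fact by a different route than the paper.

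\textbf{The paper's route.} The argument is constructive. Starting from an arbitrary $\phi$, it picks the vertex $y_1$ with the largest $v^{\phi}_y$ and redirects every edge incident to $y_1$ toward $y_1$. It then repeats with the largest remaining vertex $y_2$, flipping only edges that avoid $y_1$, and so on. Each step moves mass from a smaller entry to the current largest untouched entry, so each new profile majorizes the previous one. After $m-1$ steps it reaches the transitive profile $\tfrac{2}{m(m-1)}[m-1,\dots,1,0]$.

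\textbf{Your route.} You identify $\phi$ with a tournament. You then show directly that the transitive score vector majorizes every score vector, using the counting bound $\sum_{i\le k} d_{(i)}\le\binom{m}{2}-\binom{m-k}{2}$, which is the necessary half of Landau's condition.

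\textbf{What each buys.} Your argument is shorter. It avoids the bookkeeping the greedy chain needs: tie-breaking in the choice of each $y_i$, checking that previously fixed vertices are never touched, and checking that every transfer goes from a smaller to a larger entry. The paper asserts these points rather briskly. Your argument also makes explicit that the transitive profile is the greatest element of the majorization order over all achievable profiles. In exchange, the paper's route gives an explicit monotone path from any $\phi$ to an optimizer. It uses only the elementary fact that a ``reverse Robin Hood'' transfer increases majorization, without tournament language.

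\textbf{Common ending.} Both arguments finish the same way. The affine map $v\mapsto pv+\tfrac{1-p}{m}$ preserves majorization, Schur-concavity of entropy gives the lower bound, and $\phi\equiv 0$ attains it. Your explicit evaluation using $\sum_k c_{m,k}=0$ also supplies the computation the paper leaves as ``evaluating $h_{\phi_{m-1}}$.''
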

\begin{proof}
We first observe that for each $\phi$ in the domain, $[v^{\phi}_0,v^{\phi}_1,\cdots, v^{\phi}_{m-1}]$ constitute a probability vector, and for $y\in \mathcal{Y}$, $v^{\phi}_y$ is proportional to the number of $s$ for which $s_{\phi(s)}=y$. 

Given any $\phi\colon \mathcal{S} \to \{0,1\}$, in the following we will construct a sequence of functions $$\phi_1,\phi_2,\cdots, \phi_{m-1}$$ in the domain, with $h_{\phi}\geq h_{\phi_1}\geq \cdots \geq h_{\phi_{m-1}} = \big(\mbox{RHS~of~}\eqref{eq:minimum}\big)$, thus proving the lemma. Let $y_1 = \argmax_{y} v_{y}^{\phi}$ denote the index corresponding to the largest value in $[v_0^{\phi}, v_1^{\phi}, \cdots, v_{m-1}^{\phi}]$. If the maximum is achieved by multiple values of $y$, we select the smallest such $y$.

Let us define $\phi_1$ such that for each $s\in \mathcal{S}$,  $\phi_1(s) = 1-\phi(s)$ when $y_1 \in s$ and $s_{\phi(s)} \neq y_1$; otherwise, set $\phi_1(s)=\phi(s)$. Note that for all $s\in \mathcal{S}$ that contains $y_1$, we have $s_{\phi_1(s)}=y_1$, and thus $v^{\phi_1}_{y_1} = \frac{2}{m(m-1)}\cdot (m-1)$.

Let us point out that $[v_0^{\phi_1}, v_1^{\phi_1}, \cdots, v_{m-1}^{\phi_1}]$, compared to $[v_0^{\phi}, v_1^{\phi}, \cdots, v_{m-1}^{\phi}]$, is obtained by sequentially transferring a non-negative amount of probability mass from some place indexed by $y\neq y_1$ to the place indexed by $y_1$. 
Since by definition $v_{y_1}^{\phi}$ is the largest value in the vector $[v_0^{\phi}, v_1^{\phi}, \cdots, v_{m-1}^{\phi}]$, it follows that $[v_0^{\phi_1}, v_1^{\phi_1}, \cdots, v_{m-1}^{\phi_1}]$  majorizes\footnote{Let $x=[x_1,\cdots, x_m]$ and $y=[y_1,\cdots, y_m]$ be two probability vectors, i.e., $x_i\geq 0, y_i \geq 0$ for all $i\in [m]$ and $\sum_{i=1}^m x_i = \sum_{i=1}^m y_i =1$. Let $x^{\downarrow}$ and $y^{\downarrow}$ denote the vectors obtained by rearranging the components of $x$ and $y$ in non-increasing order. We say that $x$ majorizes $y$ if and only if $\sum_{i=1}^k x_i^{\downarrow} \geq \sum_{i=1}^k y_i^{\downarrow}$ for $k\in \{1,2,\cdots, m-1\}$.} $[v_0^{\phi}, v_1^{\phi}, \cdots, v_{m-1}^{\phi}]$. Therefore, \\$[\bar{v}_0^{\phi_1}, \bar{v}_1^{\phi_1}, \cdots, \bar{v}_{m-1}^{\phi_1}]$ majorizes $[\bar{v}_0^{\phi}, \bar{v}_1^{\phi}, \cdots, \bar{v}_{m-1}^{\phi}]$, where $\bar{v} \triangleq pv+(1-p)\tfrac{1}{m}$. It follows that $h_{\phi}\geq h_{\phi_1}$ as Shannon entropy is Schur-concave \cite{Marshall_Olkin_Arnold_book}.

Then, let $y_2 = \argmax_{y \in \mathcal{Y} \setminus \{y_1\}} v_{y}^{\phi_1}$. For each $s\in \mathcal{S}$, define $\phi_2(s) = 1-\phi_1(s)$ when  $y_1 \not\in s, y_2 \in s$ and $s_{\phi_1(s)} \neq y_2$; otherwise, set $\phi_2(s)=\phi_1(s)$. Note that for all $s\in \mathcal{S}$ that does not contain $y_1$ but contains $y_2$, we have $s_{\phi_2(s)}=y_2$, and thus $v^{\phi_2}_{y_2} = \tfrac{2}{m(m-1)}\cdot (m-2)$. Due to similar reasons, we have $h_{\phi_1} \geq h_{\phi_2}$. 
 
Continue this process such that for $i\in \{3,\cdots, m-1\}$, in the $i^{th}$ step, let $$y_i = \argmax_{y \in \mathcal{Y} \setminus \{y_1,\cdots, y_{i-1}\}} v_{y}^{\phi_{i-1}},$$ and for each $s\in \mathcal{S}$, define $\phi_i(s) = 1-\phi_{i-1}(s)$ when $y_{k} \not\in s, \forall k \in \{1,\cdots, i-1\}, y_i \in s$ and $s_{\phi_{i-1}(s)} \neq y_i$; otherwise, set $\phi_i(s)=\phi_{i-1}(s)$. After the $(m-1)^{th}$ step, we obtain $\phi_{m-1}$, for which we have
\begin{align}
	&[v_{y_0}^{\phi_{m-1}}, v_{y_1}^{\phi_{m-1}},\cdots, v_{y_{m-1}}^{\phi_{m-1}},  v_{y_m}^{\phi_{m-1}}] \notag \\
	&= \tfrac{2}{m(m-1)}\big[ m-1, m-2, \cdots, 1, 0 \big]
\end{align}
 Evaluating $h_{\phi_{m-1}}$ gives us the expression in \eqref{eq:minimum}.
\end{proof}

Now, we continue from \eqref{eq:to_minimize}. We obtain
\begin{align}
  C^{\C}(\mathcal{K}_m,p) &= \max_{\mP_{U}, x} \big( H(Y) - H(Y\mid U) \big)\notag \\
  &\leq \log_2 m - \min_u \min_\phi h_{\phi}\\
  &= \frac{1}{m}\sum_{k=1}^m \big( 1+  pc_{m,k}  \big)\log_2 \big( 1+  pc_{m,k}  \big) \label{eq:apply_minimization}
\end{align}
where in Step \eqref{eq:apply_minimization} we applied Lemma \ref{lem:minimization}.

On the other hand, the upper bound in \eqref{eq:apply_minimization} is attained by choosing $U$ to be uniformly distributed over $\{0,1\}$, and $x(u,s) = u$ for all $u\in \{0,1\}$ and $s \in \mathcal{S}$. \qed

\section{Proof of Corollary \ref{cor:gain_factor}} \label{proof:gain}
In this section, we write $g(\epsilon) = o(f(\epsilon))$ if $\lim_{\epsilon \to 0^+} \frac{g(\epsilon)}{f(\epsilon)} = 0$.

Let $r_m = \tfrac{\cot(\pi/2m)}{2(m-1)}+\tfrac{1}{2}$. Then $R^{\EA}(\mathcal{K}_m,p) = 1-H_b\big( pr_m + \tfrac{1-p}{2} \big) = 1-H_b\big( \tfrac{1}{2} + p(r_m - \tfrac{1}{2}) \big)$.  Since $H_b(\tfrac{1}{2}+\epsilon) = 1-\tfrac{2}{\ln 2} \epsilon^2 + o(\epsilon^2)$, and $p(r_m-\tfrac{1}{2})\to 0$ as $p\to 0$, we obtain that $R^{\EA}(\mathcal{K}_m,p) = \tfrac{2}{\ln 2}\big( \tfrac{\cot(\pi/2m)}{2(m-1)} \big)^2p^2 + o(p^2)$.

On the other hand, $C^{\C}(\mathcal{K}_m,p) = \frac{1}{m}\sum_{k=1}^m \big( 1+  pc_{m,k}  \big)\log_2 \big( 1+  pc_{m,k}  \big)$, where $c_{m,k} = \tfrac{m-2k+1}{m-1}$. Since $\log_2(1+\epsilon) = \tfrac{\epsilon}{\ln 2} - \frac{\epsilon^2}{2\ln 2} + o(\epsilon^2)$, and $pc_{m,k}\to 0$ as $p\to 0$, we obtain that $\log_2 (1+pc_{m,k}) = \tfrac{pc_{m,k}}{\ln 2} - \tfrac{p^2 c_{m,k}^2}{2\ln 2} + o(p^2)$, and that  $(1+pc_{m,k})\log_2(1+pc_{m,k}) =   (1+pc_{m,k}) \big( \tfrac{pc_{m,k}}{\ln 2} - \tfrac{p^2 c_{m,k}^2}{2\ln 2} + o(p^2) \big) = \tfrac{pc_{m,k}}{\ln 2}+\tfrac{p^2 c_{m,k}^2}{2\ln 2} + o(p^2)$. It can be verified that $\sum_{k=1}^m c_{m,k} = 0$ and $\sum_{k=1}^m c_{m,k}^2 = \tfrac{m(m+1)}{3(m-1)}$. It follows that $C^{\C}(\mathcal{K}_m,p) = \tfrac{1}{2\ln 2} \tfrac{m+1}{3(m-1)}p^2 + o(p^2)$.

Therefore, $\lim_{p \to 0^+} \frac{R^{\EA}(\mathcal{K}_m,p)}{C^{\C}(\mathcal{K}_m,p)} = \frac{\tfrac{2}{\ln 2} \big( \tfrac{\cot(\pi/2m)}{2(m-1)} \big)^2}{\tfrac{1}{2\ln 2} \tfrac{m+1}{3(m-1)}} = \frac{3\cot^2 ( \tfrac{\pi}{2m} )}{m^2-1}$.

Finally, $\lim_{m\to \infty}\frac{3\cot^2 ( \tfrac{\pi}{2m} )}{m^2-1} = \lim_{m\to \infty}\frac{3/(m^2-1) }{\tan^2 ( \tfrac{\pi}{2m} )} = \lim_{m\to \infty}\frac{3/(m^2-1) }{ ( \tfrac{\pi}{2m} )^2} = \frac{12}{\pi^2}$. \qed

\section{Proof of Lemma \ref{lem:singleletter_classical}} \label{proof:singleletter_classical}
  The proof is by induction. Note that the base case $k=1$ is assumed to be true. Suppose that $M^{\C}_{k,\opt} = 1$ for some $k\geq 1$. Given any $(M=2,N=k+1)$ (deterministic) classical coding scheme with encoders $\{\phi^{(i)}\}_{i\in[k+1]}$,  there exist $\tilde{s}^k, \tilde{\sigma}^k \in \mathcal{S}^k$ and $\tilde{y}^k \in \mathcal{Y}^k$ such that $\prod_{i=1}^k \mN\big(\tilde{y}_i\mid \phi^{(i)}(w=1, \tilde{s}^i),\tilde{s}_i \big) >0$ and $\prod_{i=1}^k \mN\big(\tilde{y}_i\mid \phi^{(i)}(w=2, \tilde{\sigma}^i), \tilde{\sigma}_i\big) >0$, because otherwise one obtains an $(M=2,n=k)$ zero-error coding scheme by having encoders $\{\phi^{(i)}\}_{i\in[k]}$ and a decoder that depends only on the first $k$ outputs, thus violating the inductive assumption. Similarly, there exist $\tilde{s}_{k+1}, \tilde{\sigma}_{k+1} \in \mathcal{S}$ and $\tilde{y}_{k+1} \in \mathcal{Y}$ for which $\mN\big(\tilde{y}_{k+1} \mid \phi^{(k+1)}(w=1,s^{k}=\tilde{s}^k, s_{k+1}=\tilde{s}_{k+1}),\tilde{s}_{k+1}\big)>0$ and $\mN\big(\tilde{y}_{k+1} \mid \phi^{(k+1)}(w=2,s^{k}=\tilde{\sigma}^k, s_{k+1}=\tilde{\sigma}_{k+1}),\tilde{\sigma}_{k+1}\big)>0$, because otherwise one obtains an $(M=2,n=1)$ zero-error scheme $\tilde{\phi}$ defined by $\tilde{\phi}(w=1,s) \triangleq \phi^{(k+1)}(w=1, s^k=\tilde{s}^k, s_{k+1}=s)$ and $\tilde{\phi}(w=2, s) \triangleq \phi^{(k+1)}(w=2, s^k=\tilde{\sigma}^k, s_{k+1}=s)$ for all $s\in \mathcal{S}$, violating the base case $k=1$. Therefore, there exist $\tilde{s}^{k+1}, \tilde{\sigma}^{k+1} \in \mathcal{S}^{k+1}$ and $\tilde{y}^{k+1}\in \mathcal{Y}^{k+1}$ such that $\prod_{i=1}^{k+1} \mN\big(\tilde{y}_i\mid \phi^{(i)}(w=1, \tilde{s}^i), \tilde{s}_i\big) >0$ and $\prod_{i=1}^{k+1} \mN\big(\tilde{y}_i\mid \phi^{(i)}(w=2, \tilde{\sigma}^i), \tilde{\sigma}_i\big) >0$, implying that the scheme cannot have zero error. Therefore, any $(M=2,n=k+1)$ classical coding scheme cannot have zero error, so $M_{k+1,\opt}^{\C} = 1$.\qed

\section{Proof of Theorem \ref{thm:no_activation_for_graph}} \label{proof:no_activation_graph}
From Lemma \ref{lem:singleletter_classical} it follows that  $c^{\C}_0(\mathcal{G})=0 \Leftrightarrow C^{\C}_0(\mathcal{G})=0$.
In the following, we first prove that $c^{\C}_0(\mathcal{G})=0 \implies$ ``$\mathcal{G}$ is not bipartite", and then prove that ``$\mathcal{G}$ is not bipartite" $\implies c^{\EA}_0(\mathcal{G})=0$. The proof of Theorem \ref{thm:no_activation_for_graph} can then be concluded by noting the obvious direction $c^{\EA}_0(\mathcal{G})=0 \implies c^{\C}_0(\mathcal{G})=0$.
\subsection{$c^{\C}_0(\mathcal{G})=0 \implies$ ``$\mathcal{G}$ is not bipartite"}
It suffices to show that if $\mathcal{G}$ is bipartite, then $c^{\C}_0(\mathcal{G})>0$.
In fact, if $\mathcal{G}$ is bipartite, then there exists an $(M=2,n=1)$ coding scheme which has zero error, so $c^{\C}_0(\mathcal{G}) \geq \log_2(2) = 1$. 
To see this, recall that $\mathcal{G}(\mathcal{Y},\mathcal{S})$ defines the graph channel with state, such that the vertex set $\mathcal{Y}$ corresponds to the channel outputs and the edge set $\mathcal{S}$ corresponds to the channel states. The decoder is defined by a valid bipartition of the graph, i.e., $\mathcal{Y} = \mathcal{Y}_1 \cup \mathcal{Y}_2$ with $\mathcal{Y}_1\cap \mathcal{Y}_2=\emptyset$, such that every edge has one endpoint in $\mathcal{Y}_1$ and the other in $\mathcal{Y}_2$. It then declares $\widehat{W}=1$ when $y\in \mathcal{Y}_1$, and $\widehat{W}=2$ when $y\in \mathcal{Y}_2$. Note that conditioned on each state (edge) $S\in \mathcal{S}$, the encoder controls which endpoint of $S$ is to be seen by the receiver, and since the graph is bipartite, it is free to choose a point in $\mathcal{Y}_1$ for $W=1$ or a point in $\mathcal{Y}_2$ for $W=2$, thus having zero error.  \hfil \qed

\subsection{``$\mathcal{G}$ is not bipartite" $\implies c^{\EA}_0(\mathcal{G}) = 0$}
It remains to prove that if $\mathcal{G}$ is not bipartite, then $c^{\EA}_0(\mathcal{G}) = 0$.  
It is well known that a if $\mathcal{G}$ is not bipartite, then $\mathcal{G}$ contains a cycle of odd length $m$. 
Without loss of generality, the vertices in this cycle are labeled as $Y\in \{0,1,\cdots, m-1\}$, and the edges in the cycle are labeled as $S\in \{0,1,\cdots, m-1\}$, so that $Y= S-X \pmod m$ for $X\in \{0,1\}$.

In the following, for a positive semi-definite operator $\rho\in \mathbb{C}^d$, we denote $\supp(\rho)$ as the support of $\rho$, i.e., the span of the eigenvectors corresponding to its nonzero eigenvalues. We use $\Pi_{\rho}$ to denote the orthogonal projector onto $\supp(\rho)$. For two positive semi-definite operators $\rho, \sigma \in \mathbb{C}^d$, we write $\sigma \succeq \rho$ if $\sigma-\rho$ is positive semi-definite, and we write $\rho \to \sigma$ if $\Pi_{\rho} \sigma \Pi_{\rho} \succeq \rho$.
The following lemma will be useful.
\begin{lemma}\label{lem:inclusion}
  Let $\rho_0,\rho_1, \sigma_0$, and $\sigma_1$ be positive semi-definite operators in $\mathbb{C}^d$ such that $\rho_0+\rho_1 = \sigma_0 + \sigma_1$ and $\rho_0 \perp \sigma_1$. Then, $\rho_0 \to \sigma_0$ and $\sigma_1 \to \rho_1$.
\end{lemma}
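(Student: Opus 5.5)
The plan is a direct compression argument. I read the hypothesis $\rho_0\perp\sigma_1$ as orthogonality of supports, $\supp(\rho_0)\perp\supp(\sigma_1)$. For positive semi-definite operators this is equivalent to $\Tr[\rho_0\sigma_1]=0$ and to $\rho_0\sigma_1=0$. In particular it gives
\begin{align}
\Pi_{\rho_0}\sigma_1\Pi_{\rho_0}=0 \quad\text{and}\quad \Pi_{\sigma_1}\rho_0\Pi_{\sigma_1}=0.
\end{align}
The reason is that $\Pi_{\rho_0}\Pi_{\sigma_1}=0$ when the supports are orthogonal, and $\sigma_1=\Pi_{\sigma_1}\sigma_1\Pi_{\sigma_1}$, $\rho_0=\Pi_{\rho_0}\rho_0\Pi_{\rho_0}$. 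I will state this equivalence at the outset, since it is the only point where the meaning of ``$\perp$'' matters.

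For the first claim, $\rho_0\to\sigma_0$, I conjugate the identity $\rho_0+\rho_1=\sigma_0+\sigma_1$ by $\Pi_{\rho_0}$. Using $\Pi_{\rho_0}\rho_0\Pi_{\rho_0}=\rho_0$ and $\Pi_{\rho_0}\sigma_1\Pi_{\rho_0}=0$, this gives
\begin{align}
\Pi_{\rho_0}\sigma_0\Pi_{\rho_0} = \rho_0 + \Pi_{\rho_0}\rho_1\Pi_{\rho_0}.
\end{align}
The last term is positive semi-definite, because conjugation preserves positivity. Hence $\Pi_{\rho_0}\sigma_0\Pi_{\rho_0}\succeq\rho_0$, which is exactly $\rho_0\to\sigma_0$.

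The second claim, $\sigma_1\to\rho_1$, is symmetric. I conjugate the same identity by $\Pi_{\sigma_1}$. Now $\Pi_{\sigma_1}\rho_0\Pi_{\sigma_1}=0$ and $\Pi_{\sigma_1}\sigma_1\Pi_{\sigma_1}=\sigma_1$, so
\begin{align}
\Pi_{\sigma_1}\rho_1\Pi_{\sigma_1}=\sigma_1+\Pi_{\sigma_1}\sigma_0\Pi_{\sigma_1}\succeq\sigma_1 .
\end{align}

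There is no real obstacle here. The only step that needs care is justifying that orthogonality of the positive semi-definite operators forces the cross compressions to vanish, and that argument is given above.
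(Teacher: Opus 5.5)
Your proposal is correct and matches the paper's proof: conjugate $\rho_0+\rho_1=\sigma_0+\sigma_1$ by $\Pi_{\rho_0}$, use $\rho_0\perp\sigma_1$ to eliminate the $\sigma_1$ term, and drop the positive semi-definite remainder $\Pi_{\rho_0}\rho_1\Pi_{\rho_0}$. The only differences are that you justify $\Pi_{\rho_0}\sigma_1\Pi_{\rho_0}=0$ explicitly and write out the symmetric claim $\sigma_1\to\rho_1$, both of which the paper leaves implicit.
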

\begin{proof}
$\Pi_{\rho_0} \sigma_0 \Pi_{\rho_0} \stackrel{(a)}{=} \Pi_{\rho_0} (\sigma_0 + \sigma_1) \Pi_{\rho_0} \stackrel{(b)}{=}  \Pi_{\rho_0} (\rho_0 + \rho_1) \Pi_{\rho_0} =  \rho_0 + \Pi_{\rho_0}\rho_1 \Pi_{\rho_0} \stackrel{(c)}{\succeq} \rho_0$.  Step (a) is because $\rho_0 \perp \sigma_1$, so $\Pi_{\rho_0} \sigma_1 \Pi_{\rho_0} = {\bf 0}$. Step (b) uses the condition $\rho_0+\rho_1 = \sigma_0 + \sigma_1$. Step (c) is because $\Pi_{\rho_0} \rho_1 \Pi_{\rho_0}$ is positive semi-definite.
\end{proof}

To set up a proof by contradiction, suppose $c_0^{\EA}(\mathcal{G}) > 0$, then there exists an entanglement-assisted $(M=2,n=1)$ coding scheme with zero error. Let $\alpha_{x\mid s} \triangleq \Tr_A\big[\rho_{x\mid w=1,s}\big]$ and let $\beta_{x\mid s} \triangleq \Tr_A\big[\rho_{x\mid w=2,s}\big]$. 
Then,
\begin{align} 
  \alpha_{0\mid s}+\alpha_{1\mid s}&=  \Tr_A\big[ \sum_{x\in\{0,1\}} \rho_{x\mid w=1,s} \big]\\
  &=\Tr_A\big[(\sum_{x\in \{0,1\}}\mathcal{E}^{(1)}_{x\mid w=1,s} \otimes {\rm id}_B)(\rho_{AB})\big]  \\
  &= \rho_B \label{eq:identical_sum}
\end{align}
regardless of $s$ for $s\in \mathcal{S}$, where $\rho_B \triangleq \Tr_A\big[\rho_{AB}\big]$ is the initial reduced state at the receiver. 
Similarly 
\begin{align}
	\beta_{0\mid s}+\beta_{1\mid s} = \rho_B, \label{eq:identical_sum_2}
\end{align}
regardless of $s$. This reflects the fact that the transmitter's local measurement will not change the receiver's (average) quantum state, also known as the no-signaling (no-communication) theorem (e.g., \cite{Peres_QIT}). Meanwhile, for each $x,x'\in \{0,1\}, s,s'\in \{0,1,\cdots, m-1\}$ such that $s-x \equiv s'-x' \mod m$, we must have
\begin{align} \label{eq:orthogonal}
  \alpha_{x\mid s} \perp \beta_{x'\mid s'}.
\end{align}
To see this, note that the two cases, namely, $(W=1, S=s, X=x)$ and $(W=2, S=s', X=x')$ lead to the same channel output $Y$, so the receiver has to distinguish between $W=1$ and $W=2$ by measuring the quantum state, which is either in $\alpha_{x\mid s}$ (if $W=1$), or $\beta_{x'\mid s'}$ (if $W=2$). In order for the two states to be perfectly distinguishable, \eqref{eq:orthogonal} must hold.

Fig. \ref{fig:chain} shows an implied chain of $\to$ relations.
\begin{figure}[htbp]
\center
\begin{tikzpicture}
  \node (E1) {\footnotesize $\alpha_{0|0} \to \beta_{0|1} \to \alpha_{0|2} ~\cdots~  \alpha_{0|m\!-\!1} \to \beta_{0|0} \to \alpha_{0|1}  ~\cdots~  \beta_{0|m\!-\!1} \to \alpha_{0|0}$};
  \node[below=0.5cm of E1] (E2) {\footnotesize $\alpha_{1|0} \gets \beta_{1|1} \gets \alpha_{1|2} ~\cdots~  \alpha_{1|m\!-\!1} \gets \beta_{1|0} \gets \alpha_{1|1}  ~\cdots~  \beta_{1|m\!-\!1} \gets \alpha_{1|0}$};
  \draw[thick, black] (-4,-0.25) -- node[midway, fill=white, inner sep=1pt] {\footnotesize $\perp$} (-3.2,-0.75);
  \draw[thick, black] (-3,-0.25) -- node[midway, fill=white, inner sep=1pt] {\footnotesize $\perp$} (-2.2,-0.75);
  \draw[thick, black] (-0.5,-0.25) -- node[midway, fill=white, inner sep=1pt] {\footnotesize $\perp$} (0.3,-0.75);
  \draw[thick, black] (0.5,-0.25) -- node[midway, fill=white, inner sep=1pt] {\footnotesize $\perp$} (1.3,-0.75);
  \draw[thick, black] (3,-0.25) -- node[midway, fill=white, inner sep=1pt] {\footnotesize $\perp$} (3.8,-0.75);
  \node at (-1.5,-0.5) {\small $\cdots$};
  \node at (2,-0.5) {\small $\cdots$};
\end{tikzpicture}
\caption{Chain of relations implied by \eqref{eq:identical_sum}, \eqref{eq:identical_sum_2}, \eqref{eq:orthogonal} and Lemma \ref{lem:inclusion}.}\label{fig:chain}
\end{figure}
To see this, note that for each column, there are two operators, and their sum is equal to $\rho_B$ according to \eqref{eq:identical_sum} and \eqref{eq:identical_sum_2}. Also, \eqref{eq:orthogonal} implies the $\perp$ relation between some operators shown across the two rows. The two chains of $\to$ relation then follow from Lemma \ref{lem:inclusion}. 

Note that the relation $\rho \to  \sigma$ means $\Pi_{\rho}\sigma \Pi_{\rho} \succeq \rho$, so $\Tr\big[ \rho \big] \leq \Tr\big[ \Pi_{\rho}\sigma \Pi_{\rho} \big] = \Tr\big[ \Pi_{\rho}^2 \sigma \big] = \Tr\big[ \Pi_{\rho} \sigma  \big] \leq \Tr\big[ \Pi_{\rho} \sigma  \big] +  \Tr\big[ (I-\Pi_{\rho}) \sigma  \big] = \Tr\big[ \sigma \big]$.
Applying the argument recursively based on the top chain of $\to$ relations, we obtain that
\begin{equation} \label{eq:chain}
\begin{aligned}
	 \Tr\big[\alpha_{0|0}\big] &\leq \Tr\big[ \Pi_{\alpha_{0|0}} \beta_{0|1} \big] \leq  \Tr\big[ \beta_{0|1} \big] \\
	 &\leq \Tr\big[ \Pi_{\beta_{0|1}} \alpha_{0|2}  \big] 	\leq  \Tr\big[ \alpha_{0|2} \big] \\
	 &~~\vdots \\
     &\leq   \Tr\big[ \Pi_{\beta_{0|m-1}} \alpha_{0|0}  \big] \leq \Tr\big[\alpha_{0|0}\big]
\end{aligned}
\end{equation}
Therefore, all traces in \eqref{eq:chain} must be equal. In particular, $\Tr\big[\alpha_{0|0}\big] = \Tr\big[ \Pi_{\beta_{0|m-1}}  \alpha_{0|0} \big]$, so $\Tr\big[(I-\Pi_{\beta_{0|m-1}}) \alpha_{0|0}\big] = 0$, and thus $\alpha_{0|0}=\Pi_{\beta_{0|m-1}} \alpha_{0|0}$. It follows that $\supp(\alpha_{0|0}) \subseteq \supp(\beta_{0|m-1})$, because any element in $\supp(\alpha_{0|0})$ can be written as (for some $v$) $\alpha_{0|0} v = \Pi_{\beta_{0|m-1}} \alpha_{0|0}v \in \supp(\beta_{0|m-1})$. Similarly,  $\supp(\beta_{0|m-1}) \subseteq \supp(\alpha_{0|m-2}) \subseteq \cdots \subseteq \supp(\alpha_{0|0})$, so all supports $\supp(\alpha_{0|0}), \supp(\beta_{0|1}), \cdots,$ $ \supp(\beta_{0|m-1})$ are equal.  In particular, $\supp(\alpha_{0|0}) = \supp(\beta_{0|0})$. However, \eqref{eq:orthogonal} further implies that $\alpha_{0|0} \perp \beta_{0|0}$, so $\supp(\alpha_{0|0})=\supp(\beta_{0|0}) = \{0\}$, i.e., $\alpha_{0|0}=\beta_{0|0}={\bf 0}$. Due to similar reasons, the second row implies $\alpha_{1|0} = \beta_{1|0} = {\bf 0}$, so $\alpha_{0|0} + \alpha_{1|0} = {\bf 0}$. Note that \eqref{eq:identical_sum} implies that $\alpha_{0|0} + \alpha_{1|0} = \rho_B$. Since $\rho_B$ is a (normalized) density operator, $\rho_B \not={\bf 0}$. We thus arrive at a contradiction.  \hfil \qed

\section*{Acknowledgement}
The authors acknowledge the use of ChatGPT to assist in the derivations in \eqref{eq:use_prob1}--\eqref{eq:interm2} and in obtaining a shorter proof of Corollary \ref{cor:gain_factor} in Appendix B. All arguments and content were carefully reviewed and edited by the authors.

\bibliographystyle{IEEEtran}
\bibliography{../../bib_file/yy.bib}
\end{document}